\newcommand{\un}[1]{{\underline{#1}}}
\newcommand{\ZZ}{\mathbb{Z}}
\newcommand{\RR}{\mathbb{R}}
\newcommand{\CC}{\mathbb{C}}
\newcommand{\NN}{\mathbb{N}}
\newcommand{\be}{\begin{equation}}
\newcommand{\ee}{\end{equation}}
\newcommand{\bea}{\begin{eqnarray}}
\newcommand{\eea}{\end{eqnarray}}
\def\half{\frac{1}{2}}
\def\tr{{\,{\rm tr}\,}}
\def\one{\mathbbm{1}}
\journalname{Communications in Mathematical Physics}
\begin{document}

\title{Integrability of a deterministic cellular automaton driven by stochastic boundaries}
\titlerunning{Boundary driven integrable cellular automaton}
 
\author{Toma\v{z} Prosen \and Carlos Mej\'ia-Monasterio}

\institute{Toma\v{z} Prosen  \at 
Faculty of Mathematics and Physics, University of Ljubljana,\\
Jadranska 19, SI-1000 Ljubljana, Slovenia\\
\email{tomaz.prosen@fmf.uni-lj.si}
\and
Carlos Mej\'ia-Monasterio \at
Laboratory of Physical Properties, Technical University of Madrid, \\
Av. Complutense s/n, 28040 Madrid, Spain
}

\date{\today}

\maketitle

\begin{abstract}
We propose an interacting many-body space-time-discrete Markov chain model, which is composed of an integrable deterministic and reversible cellular automaton 
(the rule 54 of [Boben\-ko {\em et al}, CMP {\bf 158}, 127 (1993)]) on a finite one-dimensional lattice $(\ZZ_2)^{\times n}$, and local stochastic Markov chains at the two lattice boundaries which provide chemical baths for absorbing or emitting the solitons. Ergodicity and mixing of this many-body Markov chain is proven for generic values of bath parameters,
implying existence of a unique {\em non-equilibrium steady state}. The latter is constructed exactly and explicitly in terms of a particularly simple form of matrix product ansatz which is termed a {\em patch ansatz}. This gives rise to an explicit computation of observables and $k$-point correlations in the steady state as well as the 
construction of a nontrivial set of local conservation laws. Feasibility of an exact solution for the full spectrum and eigenvectors (decay modes) of the Markov matrix is suggested as well. 
We conjecture that our ideas can pave the road towards a theory of integrability of boundary driven classical deterministic lattice systems.
\end{abstract}
 
\maketitle

\section{Introduction}

In 1993, Bobenko {\em et al.} \cite{bob} presented and discussed a simple model of a two-state ($\ZZ_2$) {\em reversible cellular automaton} -- the so-called `rule 54' --   (RCA54) which possesses the main features of integrability and can perhaps be considered as the simplest strongly interacting classical dynamical system with nontrivially scattering soliton solutions. The rule can be encoded via a deterministic local mapping on a diamond-shaped plaquette, $\chi : \ZZ_2\times \ZZ_2\times \ZZ_2 \to \ZZ_2$, determining the state of the {\em south} edge in terms of the states of the  {\em west},  {\em east}, and {\em north} edges
\be
 s_{\rm S} = \chi({s_{\rm W},s_{\rm N},s_{\rm E}}) = s_{\rm N} + s_{\rm W} + s_{\rm E} +  s_{\rm W} s_{\rm E} \pmod{2}, \quad s_{\rm S},s_{\rm N},s_{\rm W},s_{\rm E} \in \ZZ_2,
 \label{chi}
\ee
where the time runs in {\em north -- south} direction. The RCA54 is clearly reversible, as at the same time $s_{\rm N} = \chi(s_{\rm W},s_{\rm E},s_{\rm S})$.
See the bulk (central) part of Fig.~\ref{MCsnap} to observe a typical evolution pattern of the RCA54 rule when applied to some initial configuration in the upper rows.

 
Deterministic  long-time dynamics generated by (\ref{chi}) via permutations over the set ${\cal C} = (\ZZ_2)^{\times n}$  of all possible $2^n$ configurations of the zig-zag lattice (chain) of $n$ subsequent cells (see e.g. Fig.~\ref{Scheme} for schematic illustration) is always, either rather trivial, or depending crucially on the boundary conditions, i.e. the states of the cells at spatial coordinates $x=1$ and $x=n$ which cannot be determined dynamically unless additional rules are introduced. In this paper we propose stochastic update rules for the boundary cells implemented through a pair of local Markov chains which can be interpreted as chemical reservoirs parametrized with absorbtion and emission rates of solitons at each boundary. Note that such a hybrid bulk-deterministic, boundary-stochastic statistical mechanics paradigm is fundamentally different from well-studied boundary driven-diffusive classical simple exclusion processes \cite{derrida}, which have stochastic bulk dynamics but can be interpreted as many-body Markov chains over an identical state space $\RR^{\cal C} \simeq (\RR^2)^{\otimes n}$. Rather, the novel paradigm proposed here should be understood as the simplest classical-dynamical version of integrable boundary driven nonequilibrium quantum spin chains which have recently been intensely studied (see e.g. Ref.~\cite{review} for a review).

After introducing the Markov chain model for boundary driven RCA54 dynamics in section \ref{sect:model}, we shall present the proof of irreducibility and aperiodicity of the resulting Markov matrix, which implies uniqueness of the {\em non-equilibrium steady state} (NESS) and asymptotic approach to NESS from any initial state (ergodicity and dynamical mixing -- sometimes referred to as {\em strong ergodicity} \cite{markov_chains}). The main result of our paper, presented in section \ref{sect:patch}, 
is an exact and explicit solution for NESS in terms of a particular, {\em commutative} but {\em correlated} (non-separable) matrix product ansatz, which we term a {\em patch state ansatz}.
In subsequent section \ref{sect:observables} we shall demonstrate explicit computation of observables, such as steady state values of density, soliton-currents, and arbitrary 
$k-$point spatial density-density correlation functions. Moreover, we demonstrate in section \ref{sect:cl}, that similarly to the case of boundary driven quantum $XXZ$ spin-1/2 chains \cite{prl,review}, one can 
exploit the analytical form of NESS to generate nontrivial (quasi)local conservation laws. In the last section \ref{sect:discussion} we discuss some interesting follow-up questions, such as computation of decay modes and full spectrum of the Markov matrix, and conclude.

\section{Bulk--deterministic, boundary--stochastic Markov--chain soliton model}
\label{sect:model}

Throughout our work we shall -- for simplicity and symmetry reasons -- assume that the number of cells is {\em even}
\be n=2m.\ee 
The extension of the results to the case of {\em odd} sizes $n$ should be straightforward.
We identify a state space with a vector space ${\cal S} = \RR^{\cal C} = (\RR^2)^{\otimes n}$, a linear space embedding a convex subspace of all probability state vectors 
$\un{p}=(p_0,p_1,\ldots,p_{2^n-1}) \in {\cal S}$, satisfying $p_s \ge 0, \sum_{s=0}^{2^n-1} p_s = 1$.
A vector of $k$ binary digits $\un{s} = (s_1,s_2,\ldots,s_k) \in \ZZ_2^{\times k}$ shall often be identified with an integer $s=\sum_{j=1}^k s_k 2^{k-j}$, or, components of the probability state vector 
shall be written as $p_s \equiv p_{(s_1,s_2,\dots,s_n)}\equiv p_{s_1,s_2,\ldots s_n}$. Deterministic, local RCA54 rule in the bulk (\ref{chi}) can be encoded into a $2^3 \times 2^3$ permutation matrix
\begin{equation}
P_{(s,s',s''),(t,t',t'')} = \delta_{s,t} \delta_{s',\chi(t,t',t'')} \delta_{s'',t''},
\end{equation}
or
$$
P = \begin{pmatrix} 
\; 1\; & & & & & & & \cr
& & & \; 1\; & & & & \cr
& & \; 1\; & & & & & \cr
& \;1\; & & & & & & \cr
& & & & & & \;1\; & \cr
& & & & & & & \;1\; \cr
& & & & \;1\; & & & \cr
& & & & & \;1\; & & \end{pmatrix},
$$
which is self invertible, $P^2 = \one_{2^3}$. Here, $\one_d$ denotes a $d-$dimensional identity matrix and $\delta_{s,t}$ a Kronecker symbol.

On the boundaries, we define 2-site local stochastic Markov chains, which depend on the state of a pair of near boundary cells.
Firstly, we define a simple, single-cell (ultralocal) Markov chain, depending on in-flux probability $\alpha$ and out-flux probability $\beta$:
\begin{equation}
E^{\alpha,\beta} = 
\begin{pmatrix} 1-\alpha & \beta \cr \alpha & 1-\beta \end{pmatrix}, \quad \alpha,\beta\in [0,1].
\end{equation}
Secondly, we define 2-cell local Markov chains for a pair of cells near the boundary by imagining another, stochastic cell just beyond the boundary following a Bernoulli process $B(\half,\half)$ and then applying the local RCA54 rule (\ref{chi}) to the triple of cells. Such processes are generated by the following $4\times 4$ Markov matrices matrices, for each boundary
\be
\tilde{P}^{\rm L}_{(s',s''),(t',t'')} = \frac{1}{2} \sum_{s=0}^{1} P_{(s,s',s''),(s,t',t'')}, \quad
\tilde{P}^{\rm R}_{(s,s'),(t,t')} = \frac{1}{2} \sum_{s''=0}^{1} P_{(s,s',s''),(t,t',s'')}.
\ee
These relations can be compactly written in terms of a partial trace $\tr_{\!k}$ over $k$-th qubit of $(\CC^2)^{\otimes 3}$,
$$\tilde{P}^{\rm L} = \half \tr_{\!1} P,\quad 
\tilde{P}^{\rm R} = \half\tr_{\!3} P.$$ 

Composing these two Markov processes, we obtain, for each boundary, the final forms of $4\times 4$ matrices of 2-cell boundary Markov chains 
\bea
&& P^{\rm L} = \tilde{P}^{\rm L} (E^{\alpha,\beta}\otimes \one_2) = 
\begin{pmatrix}
\frac{1}{2} & & \frac{1}{2} & \cr
& 1-\alpha & & \beta \cr
\frac{1}{2} & & \frac{1}{2} & \cr
& \alpha & & 1-\beta \end{pmatrix}, \nonumber \\
&& P^{\rm R} = \tilde{P}^{\rm R} (\one_2 \otimes E^{\gamma,\delta}) = 
\begin{pmatrix}
\frac{1}{2} & \frac{1}{2} & & \cr
\frac{1}{2} & \frac{1}{2} & & \cr
& & 1-\gamma & \delta \cr
& & \gamma & 1-\delta \end{pmatrix}.
\eea
\begin{figure}
 \centering	
\vspace{-1mm}
\includegraphics[width=0.4\columnwidth]{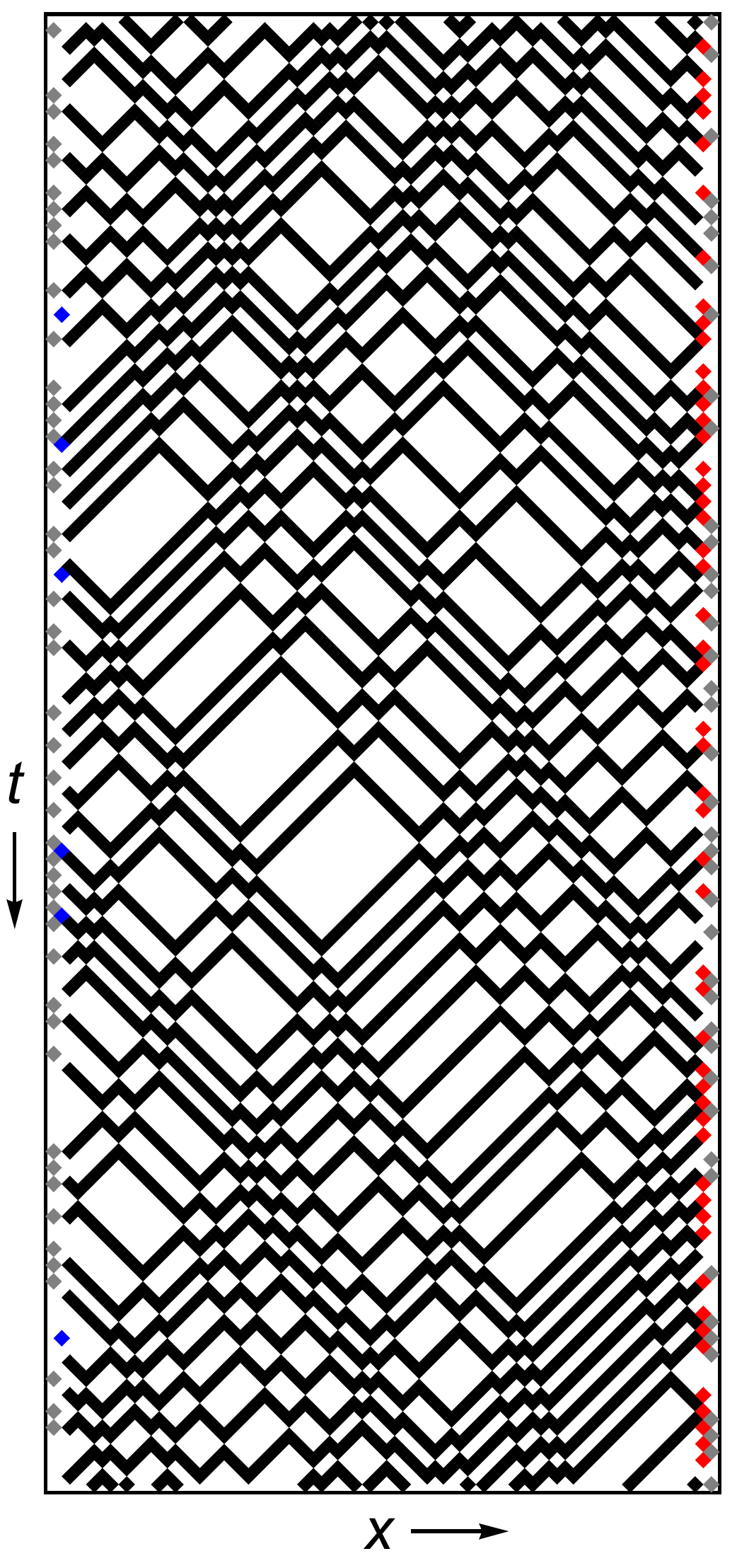}
\vspace{-1mm}
\caption{Monte Carlo dynamics of non-equilibrium stochastically boundary driven deterministic CA, rule 54, for $n=80$, $\alpha=0.1,\beta=0.9,\gamma=0.6,\delta=0.4$.
Time runs downwards. Grey squares denote occupied environmental cells which are generated by a Bernoulli shift with probability $1/2$, while blue (red) squares are occupied boundary cells determined via ultralocal Markov chain $E^{\alpha,\beta}$ ($E^{\gamma,\delta}$). Note that the right end is ``hotter'' than the left one and that the average (steady-state) soliton current points to the left, $J < 0$.}
\label{MCsnap}
\end{figure}

The full Markov chain propagator $U \in {\rm End}({\cal S})$ is then written as a composition of two temporal layer propagators
\begin{eqnarray} 
U &=& U_{\rm o} U_{\rm e}, \label{eq:U}\\
U_{\rm e} &=& P_{123} P_{345} \cdots P_{n-3,n-2,n-1} P^{\rm R}_{n-1,n}, \label{eq:Ue}\\
U_{\rm o} &=& P_{n-2,n-1,n} \cdots P_{456}P_{234}P^{\rm L}_{12} . \label{eq:Uo}
\end{eqnarray}
where embedding into ${\rm End}({\cal S})$ is understood as $P_{k,k+1,k+2} = \one_{2^{k-1}}\otimes P \otimes \one_{2^{n-k-2}}$,
$P^{\rm L}_{1,2} = P^{\rm L} \otimes \one_{2^{n-2}}$, $P^{\rm R}_{n-1,n} = \one_{2^{n-2}}\otimes P^{\rm R}$. The model depends on four external driving parameters $\alpha,\beta,\gamma,\delta \in [0,1]$, which can be understood (or related to) injection/absorption rates of solitons at the left/right boundary, respectively. Monte Carlo dynamics of driven RCA54 for some typical values of driving parameters is illustrated in Fig.~\ref{MCsnap}, while schematic composition of the full many-body Markov generator is depicted in Fig.~\ref{Scheme}

The many-body propagator $U$ is clearly a {\em stochastic matrix}, i.e., its nonnegative elements in each column sum to $1$. 
In fact, for generic values of driving parameters $0 < \alpha,\beta,\gamma,\delta < 1$, it has exactly $4$ nonvanishing matrix elements in each column, corresponding to four combinations of boundary cells $x=1$ and $x=n$. Our goal is to find a steady state probability state vector 
$\un{p}\in {\cal S}$, which is nothing but a fixed point of our many-body Markov chain -- the NESS:
\be
U \un{p} = \un{p},
\label{eq:fp1}
\ee
or equivalently, to find a pair of probability state vectors $\un{p},\un{p}'\in{\cal S}$ on subsequent temporal zig-zag layers, satisfying
\be
U_{\rm e} \un{p} = \un{p'},\quad U_{\rm o}\un{p'} = \un{p}.
\label{eq:fp}
\ee

\begin{figure}
 \centering	
\vspace{-1mm}
\includegraphics[width=0.45\columnwidth]{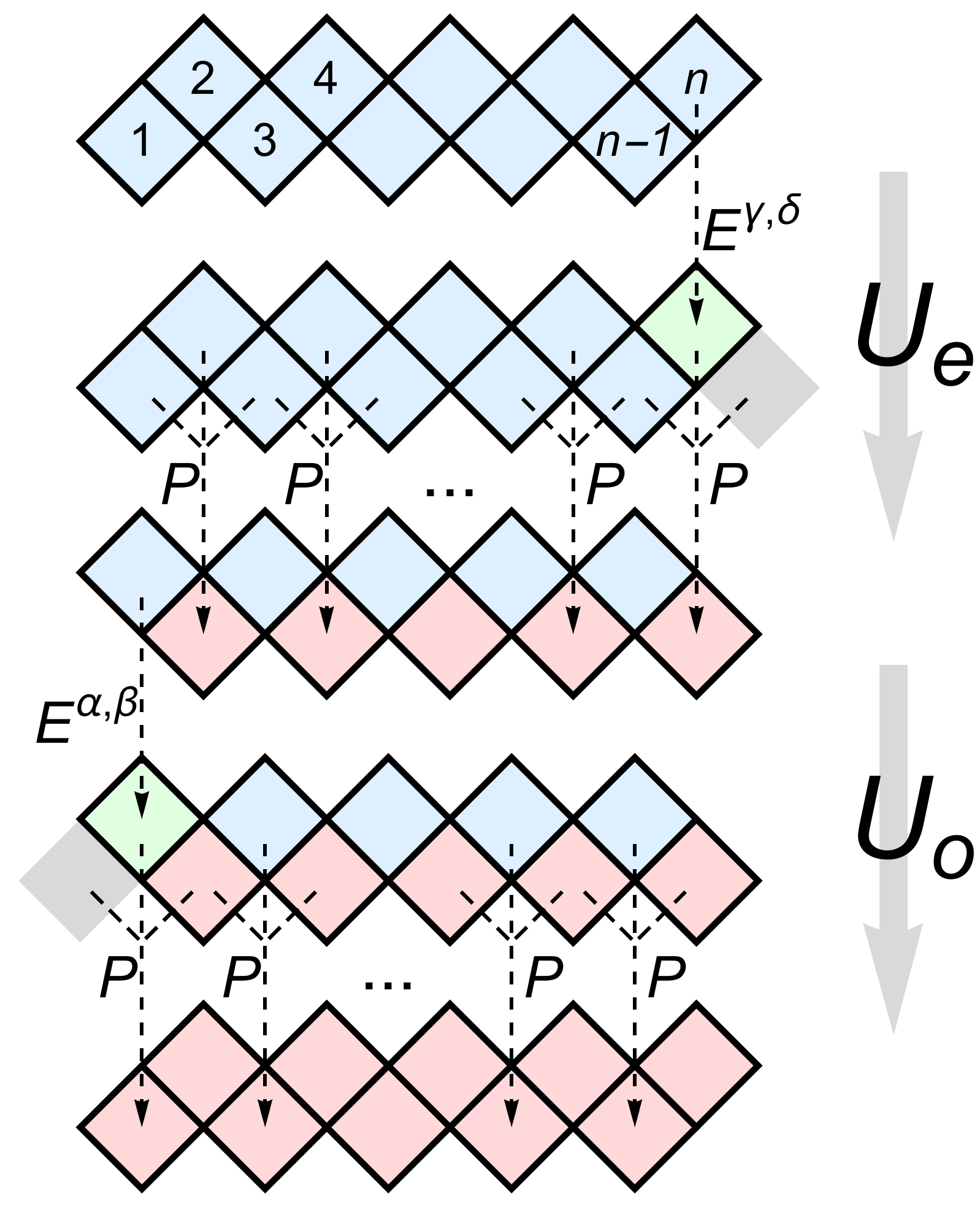}
\vspace{-1mm}
\caption{Schematic illustration of the composition of the full step many-body Markov propagator for the stochastically boundary driven deterministic cellular automaton specified by a local 3-point rule encoded in the permutation matrix $P$. 
The blue cells indicate initial values, while the red cells indicate final values. The green cells are resulting from ultralocal Markov chains $E^{\alpha,\beta}$, while light grey cells indicate probabilistic environment cells which are on/off with probability $1/2$.
}
\label{Scheme}
\end{figure}

\begin{figure}
 \centering	
\vspace{-1mm}
\includegraphics[width=0.26\columnwidth]{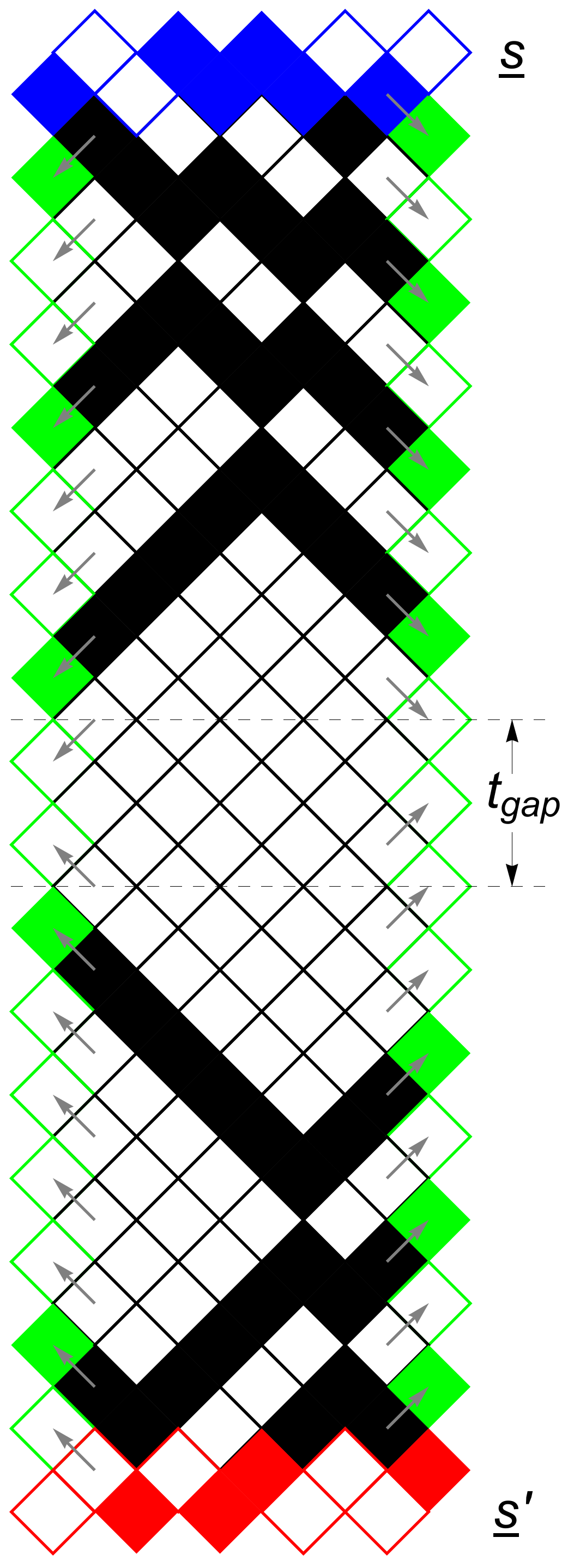}
\vspace{-1mm}
\caption{Illustration of the proof of irreducibility and aperiodicity of the Markov matrix $U$. Blue and red configurations, 
$\un{s}=(1,0,0,1,1,1,1,0,1,0)$ and $\un{s}'=(0,0,1,0,1,1,0,0,0,1)$, are connected with the Markov-graph walk for generic probabilities 
$0<\alpha,\beta,\gamma,\delta < 1$ in at least $t_0=15$ time steps where the boundary cells are chosen as indicated by green cells (the boundary conditions are generated by causal/anti-causal absorbing boundaries in the upper/lower part of the walk). The values of the boundary cells are thus determined by copying the values of the near-by bulk cells in 
the direction of the grey arrows. Consequently $(U^{t_0+t_{\rm gap}})_{\un{s},\un{s}'} > 0$ for any $t_{\rm gap} \ge 0$ 
(and any other pair of initial/final configurations $\un{s},\un{s}'$ with possibly different $t_0$), which implies irreducibility and aperiodicity of $U$).
}
\label{IrredChart}
\end{figure}

However, establishing an existence of a unique NESS and relaxation towards NESS from an arbitrary initial probability state vector amounts to \cite{markov_chains} showing the following statement:

\begin{theorem}
The $2^n \times 2^n$ matrix $U$, Eq.~(\ref{eq:U}), is irreducible and aperiodic for generic values of driving parameters, more precisely, for an open set $0 < \alpha,\beta,\gamma,\delta < 1$.
\end{theorem}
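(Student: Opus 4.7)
The plan is to prove the following stronger statement that implies the theorem: for every ordered pair of configurations $\un{s},\un{s}'\in\mathcal{C}$ there exists an integer $t_0$ such that $(U^{t_0+t})_{\un{s}',\un{s}}>0$ for every $t\ge 0$. Taking $t=0$ and varying the pair delivers irreducibility, while setting $\un{s}=\un{s}'$ shows that the set of return times to $\un{s}$ contains the arithmetic progression $\{t_0,t_0+1,t_0+2,\dots\}$, whose greatest common divisor is $1$, hence aperiodicity. It therefore suffices to construct, for each such pair, a family of positive-weight walks of length $t_0+t$ for every $t\ge 0$.

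The first ingredient is the elementary observation that under the open condition $0<\alpha,\beta,\gamma,\delta<1$ every nonzero entry of the boundary matrices $P^{\rm L}$ and $P^{\rm R}$ is strictly positive. Since the bulk gates are honest permutations, a one-step transition $\un{s}\to\un{s}''$ of $U$ therefore carries positive probability whenever $\un{s}''$ is compatible with \emph{some} choice of the two ultralocal boundary outputs and the two Bernoulli environmental cells that is not structurally forbidden by the combinatorial form of $P^{\rm L},P^{\rm R}$. Equivalently, at every time step we may steer each of the two boundary cells to any admissible target value, paying only a uniformly bounded positive multiplicative factor in the weight of the walk.

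Using this freedom, I would assemble the connecting walk in two halves joined at the vacuum $\un{0}$ as a canonical intermediate configuration; this is the two-phase scheme sketched in Fig.~\ref{IrredChart}. In the causal (upper) half one propagates forward from $\un{s}$ for $O(n)$ steps while at each step choosing the boundary cells so as to \emph{absorb} any soliton incident on either boundary --- concretely, the grey arrows in Fig.~\ref{IrredChart} indicate the rule that copies the value of the nearby bulk cell into the boundary cell. Because RCA54 solitons travel with unit maximal speed, after $O(n)$ steps the chain is swept clean of solitons and the state is driven to $\un{0}$. In the anti-causal (lower) half one exploits reversibility of the bulk, $P^2=\one$, to exhibit by the analogous absorbing construction run backward in time an $O(n)$-step forward walk from $\un{0}$ to $\un{s}'$. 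Concatenation gives a positive-weight walk from $\un{s}$ to $\un{s}'$ of some length $t_0=O(n)$.

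Finally, the walk is extended by an arbitrary $t\ge 0$ extra steps by inserting $t$ self-loops at the vacuum: $\chi(0,0,0)=0$ implies that the bulk preserves $\un{0}$, and $(P^{\rm L})_{(0,0),(0,0)}=(P^{\rm R})_{(0,0),(0,0)}=\tfrac12$, read off directly from the displayed matrices, yields $(U)_{\un{0},\un{0}}\ge\tfrac14>0$. This completes the argument. The only step that is not purely mechanical --- and hence the place I expect to have to be careful --- is the middle paragraph: one has to verify, pattern by pattern, that the value which the copy-a-neighbor absorbing rule requests at each boundary is indeed a positive-probability output of $P^{\rm L}$ or $P^{\rm R}$ (rather than being killed by a structural zero), and then check that after $O(n)$ applications of this rule the bulk is genuinely emptied of solitons. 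Both reduce to finite combinatorial checks on the explicit $4\times 4$ boundary matrices and on the local propagation of the rule $\chi$, so the obstacle is bookkeeping rather than mathematics.
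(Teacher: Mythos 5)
Your proposal is correct and follows essentially the same route as the paper: a two-phase walk through the vacuum using causal absorbing boundary conditions (copy the neighbouring bulk cell) to annihilate all solitons, then bulk reversibility to run the anti-causal absorbing construction backward from $\un{s}'$, with aperiodicity obtained by inserting vacuum self-loops of weight $(U)_{\un{0},\un{0}}=\tfrac14$. The one point you flag as needing care --- that the requested boundary value is always a positive-probability output of $P^{\rm L},P^{\rm R}$, and that the absorbing rule empties the chain in finite ($O(n)$, consistent with the paper's stated bound $\tfrac32 n-2$) time --- is handled in the paper at the same level of detail, by observing that each column of $U$ has exactly four positive entries realizing all four boundary-cell combinations and by appealing to the ballistic, pairwise-scattering soliton picture of Bobenko et al.
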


\begin{proof}
We recall \cite{perron_frobenius}  that  a finite, non-negative matrix $U$ is {\em irreducible} if for any pair of configurations $\un{s},\un{s}'\in{\cal C}$,
one can find a natural number $t_0 \in \NN$ such that $(U^{t_0})_{\un{s}',\un{s}} > 0$. An irreducible matrix $U$ is 
{\em aperiodic} if for some configuration $\un{s}\in{\cal C}$, the greatest common divisor of recurrence times $t\in\NN$, for which 
$(U^t)_{\un{s},\un{s}} > 0$, is 1.

Let us first show aperiodicity. As we have argued above, $U_{\un{s}',\un{s}}$ connects each configuration $\un{s}$ with exactly 4 other configurations 
$\un{s}'$ with all possible values of boundary cells, $(s'_1,s'_n)\in\{(0,0),(0,1),(1,0),(1,1)\} $, unless some of the parameters $\alpha,\beta,\gamma,\delta$ is equal exactly 0 or 1 which is the marginal case that is excluded from the discussion. Let us now take a sufficiently large positive integer $t_0$, to be determined below, and fix 
$\un{s}(t_0)\equiv \un{s}',\un{s}(0)\equiv\un{s}$. We shall then construct a walk 
\be
\un{s}(0)\to\un{s}(1)\to\un{s}(2)\to \cdots \to\un{s}(t_0),
\ee 
i.e., a path through the Markov graph defined by positive elements of $U$, which connects $\un{s}$ and $\un{s}'$ in $t_0$ steps and implies $(U^{t_0})_{\un{s}',\un{s}} > 0$ (see Fig.~\ref{IrredChart} for a `self-contained' graphic illustration of the idea of proof). Since we are still free to choose the values of the boundary cells $s_{1,n}(t)$ along the walk $t \in\{1,2,\ldots,t_0-1\}$ apart from the ends.
For the first part of the walk $t = 1,2\ldots t_+$, up to some $t_+<t_0$, we are fixing them with the rule
\be
s_{1}(t) = s_{2}(t-1),\quad s_{n}(t) = s_{n-1}(t-1).
\label{eq:absorb}
\ee
The evolution of the interior values of the cells $s_x(t)$ for $1 < x < n$ and $t \le t_+$ is then completely specified by the deterministic RCA54, while (\ref{eq:absorb}) provide the {\em causal absorbing} boundary conditions. Indeed, each time the boundary cell, say $x=1$, gets occupied, $s_{1}(t)=1$, the soliton as defined in \cite{bob} is absorbed (see Fig.~\ref{IrredChart}). 
As the solitons only move ballistically (at speed 1) and scatter pairwise (with time-lag 1), while they cannot form bound states, it is clear that a finite time scale 
$t_+ \in\NN$ exists, surely smaller than $n^2$, after which all the solitons will be absorbed and we end up in a vacuum configuration $\un{s}(t_+) = (0,0\ldots,0)$. 

For the rest of the walk $t\in\{t_++1,\ldots t_0\}$ we need to show that an alternative boundary rules exist which create the configuration $\un{s}'$ out of the vacuum in another $t_-=t_0-t_+$ steps. This is easily achieved by using {\em time-reversibility} of RCA54 and arguing that a vacuum configuration is again generated from $\un{s}'$ in some $t_-$ steps if the {\em anti-causal absorbing} boundary conditions are set (which are equivalent to (\ref{eq:absorb}) when the time runs backwards)
\be
s_{1}(t) = s_{2}(t+1),\quad s_{n}(t) = s_{n-1}(t+1),\quad {\rm for}\quad t = t_0-1,t_0-2, \ldots, t_0 - t_-\,.
\label{eq:absorb2}
\ee
The entire walk then connects $\un{s}$ to $\un{s}'$ in $t_0=t_++t_-$ steps and implies $(U^{t_0})_{\un{s}',\un{s}} > 0$, for arbitrary pair $\un{s},\un{s}' \in {\cal C}$ where the minimal possible integer $t_0$ may depend on the choice of $\un{s},\un{s}'$. This proves irreducibility of (\ref{eq:U}).

Considering $\un{s}'=\un{s}$, we have just shown that $U^{t_0}_{\un{s},\un{s}} > 0$ for some $t_0$ depending on $\un{s}$.
But since in between annihilating the configuration $\un{s}$ in $t_+$ time steps and then creating it again in another 
$t_-$ steps\footnote{Note that in general $t_-\neq t_+$ as a generic configuration $\un{s}$ is not time-reversal invariant.}, while $t_0=t_++t_-$,
we can await in the vacuum state for an arbitrary additional number of steps $t_{\rm gap} \ge 0$, i.e. increase the walk by a segment of $t_{\rm gap}$ intermediate vacuum configurations, and still have $(U^{t_0 +t_{\rm gap}})_{\un{s},\un{s}} > 0$. The greatest common divisor of the set $\{t_0+t_{\rm gap};t_{\rm gap}\in\ZZ_+\}$ is clearly 1, so we have shown aperiodicity.\hfill$\square$
\end{proof}

In fact, a careful combinatorics of soliton scatterings and boundary absorbtions reveals that the minimal time $t_0$ which suffices for all pairs 
of configurations $\un{s},\un{s}'$, i.e. after which $U^{t_0}$ becomes a (strictly) positive matrix, reads
\be
\min\{ t_0 \in \NN;  (U^{t_0})_{\un{s},\un{s}'} > 0, \forall \un{s},\un{s}' \} = \frac{3}{2} n - 2. 
\ee

In conclusion, the Perron-Frobenius theorem \cite{perron_frobenius} guaranties that NESS probability state vector $\un{p}$ satisfying the fixed point condition (\ref{eq:fp1}), or (\ref{eq:fp}), is {\em unique} -- eigenvalue 1 of $U$ is simple -- and all other eigenvalues of $U$ lie strictly inside the unit circle.
As a consequence, the Markov dynamics $\un{p}(t) = U^t \un{p}(0)$ is {\em ergodic and mixing} and an arbitrary initial probability state vector $\un{p}(0)$ converges exponentially in $t$ to NESS. 

\section{Exact solution of NESS and the patch state ansatz}

\label{sect:patch}

We shall now explicitly construct the probability state vectors $\un{p}$ and $\un{p}'$ of NESS, solving Eq. (\ref{eq:fp}) in terms of a simple ansatz,  which we term a {\em patch state ansatz} (PSA) [illustrated in Fig.~\ref{fig:PSAscheme}].

\begin{theorem}
For an open set of driving parameters, $0 < \alpha,\beta,\gamma,\delta < 1$, the NESS solution $\un{p},\un{p}'\in{\cal S}$ of the fixed point condition (\ref{eq:fp}) can be written, for any even size $n$, in the form
\bea
&& p_{s_1,s_2,\ldots,s_n} = L_{s_1 s_2 s_3} X_{s_2 s_3 s_4 s_5} X_{s_4 s_5 s_6 s_7}  \cdots X_{s_{n-4} s_{n-3} s_{n-2} s_{n-1}} R_{s_{n-2} s_{n-1} s_n}, \nonumber \\
&& p'_{s_1,s_2,\ldots,s_n} = L'_{s_1 s_2 s_3} X'_{s_2 s_3 s_4 s_5} X'_{s_4 s_5 s_6 s_7} \cdots X'_{s_{n-4} s_{n-3} s_{n-2} s_{n-1}} R'_{s_{n-2} s_{n-1}s_n},
\quad
\label{eq:patch}
\eea 
for some rank-4 and rank-3 tensors of strictly positive components $X_{ss'uu'}$, $X'_{ss'uu'}$, $L_{suu'}$, $L'_{suu'}$, $R_{ss'u}$, $R'_{ss'u}$, with binary indices $s,s',u,u'\in\{0,1\}$.

Explicit $n-$independent algebraic expressions for the tensors  $X$, $X'$, $L$, $L'$, $R$, $R'$ in terms of the parameters of the model $\alpha,\beta,\gamma,\delta$ 
shall be given later in the proof (\ref{finalansatz},\ref{solution}).
\end{theorem}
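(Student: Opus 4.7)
The plan is to substitute the patch state ansatz directly into the two-part fixed-point equation (\ref{eq:fp}) and reduce it to a finite system of purely local algebraic constraints on the six tensors $X,X',L,L',R,R'$. Two structural simplifications are crucial. First, any two bulk permutations $P_{k,k+1,k+2}$ and $P_{\ell,\ell+1,\ell+2}$ commute whenever they share at most one site, because $P$ updates only the middle index, so the ordering of gates within $U_{\rm e}$ and $U_{\rm o}$ is immaterial. Second, the bulk action is pure index substitution, $(P_{k,k+1,k+2}\un p)_{\ldots s_k,s_{k+1},s_{k+2}\ldots}=\un p_{\ldots s_k,\chi(s_k,s_{k+1},s_{k+2}),s_{k+2}\ldots}$, since $\chi$ is an involution in its middle slot.

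The key observation is that every bulk gate $P_{2j-1,2j,2j+1}$ in $U_{\rm e}$ is centered at an even site $2j$, which is one member of the \emph{shared} index pair $(s_{2j},s_{2j+1})$ between two adjacent $X$-patches in $\un p$; symmetrically, the bulk gates $P_{2j,2j+1,2j+2}$ in $U_{\rm o}$ are centered at odd sites, the other member of each shared pair in $\un p'$. Thus in each layer the bulk updates act precisely on the patch boundaries. I would demand that the result again factorize in patch form, with $X$ replaced by $X'$ and $L,R$ by $L',R'$ (and vice versa for the odd layer). This reduces the global fixed-point equation to a single \emph{local intertwining identity}: a polynomial relation between two consecutive $X$'s on one side and two consecutive $X'$'s on the other, parametrized by $\chi$ acting on the shared indices. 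If such an identity holds, adjacent relations telescope along the chain, leaving only analogous boundary conditions that involve $P^{\rm L}$ and $P^{\rm R}$ and hence the driving parameters $\alpha,\beta,\gamma,\delta$; these fix $L,L',R,R'$ and are what make the NESS genuinely non-equilibrium.

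Because $X,X'$ carry $2^4=16$ components each and $L,L',R,R'$ carry $2^3=8$ each, the resulting system is finite-dimensional and explicitly solvable; I would parametrize the solution in $\alpha,\beta,\gamma,\delta$, check strict positivity on the open set of the statement, and verify the closed form by direct substitution into $U_{\rm e}\un p=\un p'$ and $U_{\rm o}\un p'=\un p$. Uniqueness is already delivered by Theorem~1. The main obstacle is conceptual rather than computational: guessing the correct form of the local intertwining. The PSA overlap structure couples the bulk update of a site interior to one pair of neighbouring patches with shared indices that are summed against both the updated patch and its neighbour, so a naive ansatz with featureless, independent $X,X'$ generically fails. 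Identifying the specific algebraic cancellation that makes all cross-patch updates telescope through the chain --- a bulk analogue of the intertwining relations familiar from matrix-product-ansatz solutions for boundary-driven exclusion processes --- is the substance of the theorem; once found, the explicit closed-form solution follows from finite linear algebra.
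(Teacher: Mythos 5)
Your overall strategy is the same as the paper's: extract a finite set of local algebraic conditions on $X,X',L,L',R,R'$ from the two-layer fixed-point equations (\ref{eq:fp}), solve them explicitly, and then show that a local composition/intertwining property propagates the solution to all even $n$. Your structural observations (commuting bulk gates, pure index substitution via the involution $\chi$, even/odd gates hitting the two members of each shared patch index pair) are correct and are exactly what underlies the paper's argument.

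The gap lies in the step you yourself flag as ``the substance of the theorem.'' The identity that makes the telescoping work is \emph{not} a strictly local relation between two consecutive $X$'s and two consecutive $X'$'s: because the updated cell $s_{2j}\mapsto\chi(s_{2j-1},s_{2j},s_{2j+1})$ reads $s_{2j-1}$, which belongs to the neighbouring patch, any candidate relation necessarily drags in spectator indices from the patches on either side. The paper's resolution is the set of \emph{ratio} (``composition'') identities (\ref{ext1},\ref{ext2}), in which three $\chi$-dressed $X$-factors over two appear on one side and the dependence on the spectators ($v'$ or $v$, and $z,z'$) cancels; these $2\times 2^9$ identities are verified by computer algebra and drive an induction $n\to n+2$, seeded at small $n$ with separate boundary identities (\ref{ext3},\ref{ext4}). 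Without exhibiting these identities, or an equivalent cancellation mechanism, the telescoping claim is unsupported. A second, smaller inaccuracy: the system determining the tensors is \emph{nonlinear} (polynomial, Eqs.~(\ref{eqs2})), not ``finite linear algebra,'' and it is not automatically consistent --- the paper needs the gauge freedom (\ref{gauge}) plus numerically guessed symmetries to cut the unknowns down to $22$, and the resulting Gr\"obner basis yields a spurious second solution that must be excluded by checking fixed-point components beyond the $3$- and $4$-cluster configurations. So the plan is the right one, but the two decisive verifications --- the $X$-system and the selection of the correct, strictly positive solution branch --- remain to be carried out.
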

\begin{proof}
We shall first (i) present a minimal set of equations which are sufficient to determine the tensors $X,X',L,L',R,R'$ under the assumption of the theorem. These nonlinear algebraic equations, being sufficiently simple, can be readily solved. Then, in the second part of the proof (ii) we shall show that the PSA solution identically satisfies every  component of the fixed point conditions (\ref{eq:fp}), for any even $n$.

(i) A normalization of the PSA (\ref{eq:patch}) can be chosen such that 
\be
X_{0000}=X'_{0000}=1,\quad L_{000}=R'_{000}=1.
\ee
Clearly $X_{0000}=X'_{0000}$, otherwise the probabilities of the vacuum configurations $p_{0,0,\ldots 0}$ and $p'_{0,0,\ldots,0}$ would scale differently with $n$ which is not possible since $U_{\rm o}$ and $U_{\rm e}$ directly connect $(0,0,\ldots,0,0)$ only with configurations $(s_1,0,\ldots,0,s_n)$.

Let us now assume the ansatz (\ref{eq:patch}) and write all components of  Eqs.~(\ref{eq:fp}),
$(U_{\rm e}\un{p}-\un{p}')_{\un{s}} = (U_{\rm o}\un{p}'-\un{p})_{\un{s}}=0$,
pertaining to 4-cluster configurations in the bulk of the form\footnote{Symbol $0^{\{k\}}$ 
denotes $0$ repeated $k$ times.} $\un{s} = (0^{\{ 2k+1\}},s,s',u,u',0^{\{n-5-2k\}})$, for $k=0,1,\ldots,m-3$, and
3-cluster configurations near each boundary, $\un{s}=(v',s,s',0^{\{n-3\}})$ and $\un{s}=(0^{\{n-3\}},s,s',u)$, resulting in the following finite set of equations\footnote{As a suitable notational convention, {\em primed/unprimed} roman index shall often denote a cell occupation number at {\em odd/even} position.}:
\bea
&&  X'_{0 0 s s'} X'_{s s' u u'} X'_{u u' 0 0} X'_{0000} = \nonumber\\
&& \qquad X_{0 0 \chi(0 s s') s'} X_{\chi(0 s s') s' \chi(s' u u') u'} X_{\chi(s' u u') u' \chi(u' 0 0) 0} X_{\chi(u' 0 0) 000},\nonumber\\
&& X_{0000} X_{0 0 s s'} X_{s s' u u'} X_{u u' 0 0} = \nonumber \\
&& \qquad X'_{000\chi(0 0 s)} X'_{0 \chi(0 0 s) s \chi(s s' u)} X'_{s \chi(s s' u) u \chi(u u' 0)} X'_{u \chi(u u' 0) 0 0},\nonumber\\
&& L'_{v' s s'}X'_{s s' 0 0}X'_{0000}R'_{000} = 
L_{v' \chi(s' s s') s'}X_{\chi(v' s s') s' \chi(s' 0 0) 0}X_{\chi(s' 0 0) 000}\frac{R_{000}+R_{001}}{2}, \nonumber \\
&& L'_{000}X'_{0 0 s s'} R'_{s s' u} = 
L_{000}\sum_{t',t} P^{\rm R}_{(s',u),(t',t)} X_{00 \chi(0 s t') t'} R_{\chi(0 s t') t' t}, \nonumber\\
&&L_{v' s s'} X_{s s' 0 0}R_{000} =
 \sum_{t',t} P^{\rm L}_{(v',s),(t',t)} L'_{t' t \chi(t s' 0)} X'_{t \chi(t s' 0) 0 0}R'_{000},\nonumber\\
&&L_{000}X_{0000}X_{0 0 s s'} R_{s s' u} = 
\frac{L'_{000}+L'_{100}}{2}X'_{000 \chi(0 0 s)}X'_{0 \chi(0 0 s) s \chi(s s' u)} R'_{s \chi(s s' u) u}\, . \label{eqs}
\eea
The total number of $2\times 16+4\times 8 - 4=60$ unknowns can be further reduced by exploring the following {\em gauge symmetry} 
\bea
X_{s s' t t'}  &\longrightarrow& f_{s s'} X_{s s' t t'} f^{-1}_{t t'}, \nonumber \\
L_{s' t t'}  &\longrightarrow& L_{s' t t'} f^{-1}_{t t'},  \nonumber \\
R_{s s' t}  &\longrightarrow& f_{s s'} R_{s s' t}, \nonumber \\
X'_{s s' t t'} & \longrightarrow& g_{s s'} X'_{s s' t t'} g^{-1}_{t t'}, \nonumber \\
L'_{s' t t'} & \longrightarrow& L'_{s' t t'} g^{-1}_{t t'},  \nonumber \\
R'_{s s' t} & \longrightarrow& g_{s s'} R'_{s s' t}, \label{gauge}
\eea
which conserves the patch ansatz (\ref{eq:patch}), as well as the defining equations (\ref{eqs}) for arbitrary nonzero gauge `fields' $f_{s s'}, g_{s s'}$. We can uniquely fix $f_{s s'}, g_{s s'}$ by choosing the following gauge
$X_{00 s s'} = X'_{0 0 s s'} = 1$, $\forall s,s' \in \{ 0, 1\}.$ 
Numerical experiments suggest some further symmetries which finally inspire the following ansatz
\bea\qquad\quad
X_{0000} = 1, \quad & X'_{0000} = 1, \qquad\quad L_{000} = 1,\quad & L'_{000} = 1,\nonumber\\
X_{0001} = 1, \quad & X'_{0001} = 1, \qquad\quad L_{001} = 1,\quad & L'_{001} = x_6,\nonumber\\
X_{0010} = 1, \quad & X'_{0010} = 1, \qquad\quad L_{001} = 1,\quad & L'_{001} = x_6,\nonumber\\
X_{0011} = 1, \quad & X'_{0011} = 1, \qquad\quad L_{010} = x_7,\quad & L'_{010} = x_7,\nonumber\\
X_{0100} = x_1, \quad & X'_{0100} = x_1, \qquad\quad L_{011} = x_6,\quad & L'_{011} = 1,\nonumber\\
X_{0101} = x_1, \quad & X'_{0101} = x_1, \qquad\quad L_{100} = 1,\quad & L'_{100} = x_8,\nonumber\\
X_{0110} = x_2, \quad & X'_{0110} = x_3, \qquad\quad L_{101} = 1,\quad & L'_{101} = x_9,\nonumber\\
X_{0111} = x_4, \quad & X'_{0111} = x_5, \qquad\quad L_{111} = x_{9},\quad & L'_{111} = 1,\nonumber\\
X_{1000} = x_1, \quad & X'_{1000} = x_1, \qquad\quad R_{000} = x_{12},\quad & R'_{000} = 1,\nonumber\\
X_{1001} = x_1, \quad & X'_{1001} = x_1, \qquad\quad R_{001} = x_{13},\quad & R'_{001} = 1,\nonumber\\
X_{1010} = x_1, \quad & X'_{1010} = x_1, \qquad\quad R_{010} = x_{14},\quad & R'_{010} = x_{15},\nonumber\\
X_{1011} = x_1, \quad & X'_{1011} = x_1, \qquad\quad R_{011} = x_{12},\quad & R'_{011} = x_{16},\nonumber\\
X_{1100} = x_5, \quad & X'_{1100} = x_4, \qquad\quad R_{100} = x_{17},\quad & R'_{100} = x_{18},\nonumber\\
X_{1101} = x_5, \quad & X'_{1101} = x_4, \qquad\quad R_{101} = x_{19},\quad & R'_{101} = x_{18},\nonumber\\
X_{1110} = 1, \quad & X'_{1110} = 1, \qquad\quad R_{110} = x_{20},\quad & R'_{110} = x_{21},\nonumber\\
X_{1111} = x_1, \quad & X'_{1111} = x_1, \qquad\quad R_{111} = x_{20},\quad & R'_{111} = x_{22}, \label{finalansatz}
\eea
with $22$ unknown parameters/variables $\{x_i; i=1,\ldots,22\}$.
Assuming that all components are nonvanishing, i.e., $x_j \neq 0$, the defining relations (\ref{eqs}) are equivalent to the following set of polynomial 
equations
\bea
&& x_1 x_2 - x_4 = 0,  \qquad x_3 x_4 - 1 = 0, \qquad x_1 x_3 - x_5 = 0, \qquad x_4 x_5 - x_1 = 0, \nonumber\\
&& x_8 - 2 x_{12}  + 1 = 0, \quad x_6 + x_9 - 2 x_{12} = 0,\quad x_{12}+x_{13}-2=0,\nonumber\\
&& 2 x_1 x_{11} - x_{12} - x_{13} = 0, \;\; 2 x_4 - x_1 x_2 (x_{12} + x_{13}) = 0,\;\; 2 x_8 - x_1 x_{10}(x_{12} + x_{13}) = 0,\nonumber\\
&& 2 x_{14} - (1+ x_8) x_{15} = 0,\quad 2 x_{13} - (1 +x_8) x_{16} =0,\quad x_{17} - 2 x_{18} + x_{19} =0,\nonumber\\
&& 2 x_{20} - x_3 (1+x_8) x_{18} = 0,\;\; 2 x_{19} - x_5 (1+x_8) x_{22}=0,\;\; 2 x_{17} - x_5 (1+ x_8) x_{21} =0, \nonumber\\
&& x_1(\alpha x_7 + (1-\beta) x_{11}) - x_5 x_6 x_{12} = 0, \qquad  (\beta-\alpha-1)x_4 + x_1 x_7 x_{12} = 0,  \nonumber\\
&& x_1((1-\alpha) x_7 + \beta x_{11}) - x_5 x_9 x_{12} = 0,  \qquad   (\alpha-\beta-1)x_4 + x_1 x_{10} x_{12} = 0, \nonumber\\
&& (1-\delta) x_{12} + \gamma x_{14} - x_{21} = 0, \qquad x_{16} + (\gamma -\delta - 1) x_{20} = 0, \nonumber \\
&& \delta x_{12} + (1-\gamma) x_{14} - x_{22} = 0, \qquad x_{15} + (\delta - \gamma - 1) x_{20} = 0 \, . \label{eqs2}
\eea
Luckily, this system of nonlinear equations admits a simple solution, which can be compactly written introducing the {\em difference driving parameters}:
\be
\lambda = \alpha - \beta,\qquad \mu = \gamma - \delta,
\ee
namely:
\bea
    x_1 &=& \frac{(\lambda +2) (\mu +2)}{(\lambda  \mu +\lambda -2) (\lambda  \mu
   +\mu -2)}, \nonumber \\
   x_2&=& -\frac{(\lambda  \mu +\lambda -2)^2}{(\lambda +2) (\lambda  \mu
   +\mu -2)}, \nonumber \\ 
   x_3 &=& -\frac{(\lambda  \mu +\mu -2)^2}{(\mu +2) (\lambda  \mu +\lambda
   -2)}, \nonumber \\ 
   x_4&=& -\frac{(\mu +2) (\lambda  \mu +\lambda -2)}{(\lambda  \mu +\mu
   -2)^2}, \nonumber \\ 
   x_5 &=& -\frac{(\lambda +2) (\lambda  \mu +\mu -2)}{(\lambda  \mu +\lambda
   -2)^2}, \nonumber \\
    x_6 &=& -\frac{(\lambda  \mu +\lambda -2) ((\lambda +1) \mu  (\lambda + 1 - 2\alpha)-2 (\alpha+1) \lambda -2)}{(\lambda +2) (\lambda  \mu +\mu
   -2)}, \nonumber \\ 
   x_7 &=& -\frac{(\lambda +1) (\lambda  \mu +\lambda -2)}{\lambda +2}, \nonumber \\ 
   x_8 &=&
   \frac{(\lambda -1) (\mu +2)}{\lambda  \mu +\mu -2}, \nonumber \\ 
   x_9 &=& \frac{(\lambda  \mu
   +\lambda -2) ((\lambda +1) \mu  (\lambda + 1 - 2\alpha)-2 \alpha \lambda
   +2)}{(\lambda +2) (\lambda  \mu +\mu -2)}, \nonumber \\ 
   x_{10} &=& \frac{(\lambda -1) (\lambda  \mu
   +\lambda -2)}{\lambda +2}, \nonumber \\ 
   x_{11} &=& \frac{(\lambda  \mu +\lambda -2) (\lambda  \mu
   +\mu -2)}{(\lambda +2) (\mu +2)}, \nonumber \\ 
   x_{12} &=& \frac{\lambda  \mu +\lambda -2}{\lambda 
   \mu +\mu -2}, \nonumber \\ 
   x_{13} &=& \frac{(\lambda +2) (\mu -1)}{\lambda  \mu +\mu -2}, \nonumber \\ 
   x_{14} &=&
   -\frac{(\lambda +2) (\mu +1)}{\lambda  \mu +\mu -2}, \nonumber \\ 
   x_{15} &=& -\frac{(\lambda +2)
   (\mu +1)}{\lambda  \mu +\lambda -2}, \nonumber \\ 
   x_{16} &=& \frac{(\lambda +2) (\mu -1)}{\lambda 
   \mu +\lambda -2}, \nonumber \\ 
   x_{17} &=& -\frac{(\lambda +2) (\lambda  (\mu +1) (\mu
   +1 - 2\gamma)-2 (\gamma+1) \mu -2)}{(\lambda  \mu +\lambda -2) (\lambda  \mu +\mu
   -2)}, \nonumber \\ 
   x_{18} &=& \frac{(\lambda +2) (\mu +2)}{(\lambda  \mu +\lambda -2) (\lambda  \mu
   +\mu -2)}, \nonumber \\ 
   x_{19} &=& \frac{(\lambda +2) (\lambda  (\mu +1) (\mu + 1 - 2\gamma)-2 \gamma \mu +2)}{(\lambda  \mu +\lambda -2) (\lambda  \mu +\mu -2)}, \nonumber \\ 
   x_{20} &=&
   -\frac{\lambda +2}{\lambda  \mu +\lambda -2}, \nonumber \\ 
   x_{21} &=& \frac{\lambda  (\mu +1) (\mu +1 - 2\gamma)-2 (\gamma+1) \mu -2}{\lambda  \mu +\mu -2}, \nonumber \\ 
   x_{22} &=& -\frac{\lambda
    (\mu +1) (\mu+1 -2\gamma)-2 \gamma \mu +2}{\lambda  \mu +\mu -2}. \label{solution}
\eea
Note that tensors $X$ and $X'$ (components $x_1,\ldots, x_5$), which determine the bulk properties of NESS, depend {\em only} on the difference parameters $\lambda,\mu$, while some components of the boundary tensors $L,R,L',R'$ (namely $x_6,x_9,x_{17},x_{19},x_{21},x_{22}$)
depend explicitly also on the offset parameters $\alpha,\gamma$.
 Furthermore, {\em all} components are strictly positive, $x_j > 0,  j=1,\ldots,22$, on the open physical domain $(\alpha,\beta,\gamma,\delta)\in (0,1)^{\times 4}$. Notice as well that the Gr\" obner basis algorithm implemented within Mathematica yielded one more solution to the system of Eqs.~(\ref{eqs2}), which however can be excluded by verifying components of the fixed point condition (\ref{eq:fp}) beyond the 3-cluster and 4-cluster configurations.

\begin{figure}
 \centering	
\vspace{-1mm}
\includegraphics[width=0.5\columnwidth]{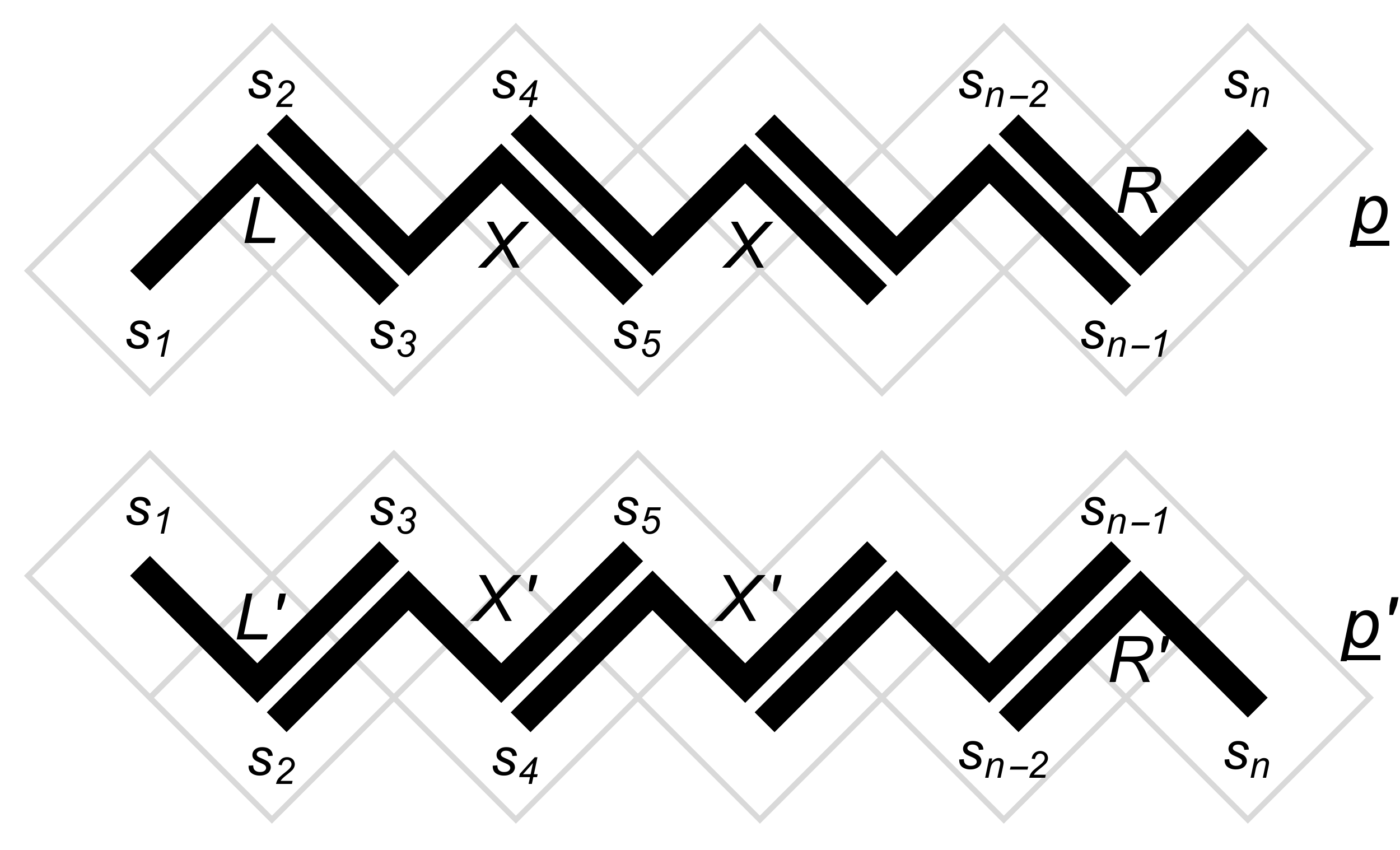}
\vspace{-1mm}
\caption{Illustration of the patch state ansatz (\ref{eq:patch}) for NESS probability state vectors $\un{p},\un{p}'$.}
\label{fig:PSAscheme}
\end{figure}

(ii) Having explicit expressions for the patch tensors $X,X',L,L',R,R'$ (\ref{solution}) one can now check that the ansatz (\ref{eq:patch}) solves the full set of   component-wise equations of the fixed point condition (\ref{eq:fp}) for a small system size, say for $n=8$ where the PSA contains products of two subsequent $X$ ($X'$) tensors. This is readily confirmed using computer algebra. Similarly, one may verify the following set of remarkable composition identities
\bea
\frac{X'_{ss'tt'}X'_{tt'uu'}}{X'_{ss'uu'}} &=& \frac{X_{\chi(v'ss')s'\chi(s'tt')t'}X_{\chi(s'tt')t'\chi(t'uu')u'}X_{\chi(t'uu')u'\chi(u'zz')z'}}{X_{\chi(v'ss')s'\chi(s'uu')u'}X_{\chi(s'uu')u'\chi(u'zz')z'}}, \label{ext1}\\
\frac{X_{ss'tt'}X_{tt'uu'}}{X_{ss'uu'}} &=& \frac{X'_{z\chi(zz's)s\chi(ss't)}X'_{s\chi(ss't)t\chi(tt'u)}X'_{t\chi(tt'u)u\chi(uu'v)}}{X'_{z\chi(zz's)s\chi(ss'u)}X'_{s\chi(ss'u)u\chi(uu'v)}}, \label{ext2}
\eea
for an arbitrary configuration of indices $s,s',t,t',u,u',z,z'$, and $v'$ for (\ref{ext1}) or $v$ for (\ref{ext2}), that means in total $2\times 2^9=1024$ identities.

Now we can make an inductive step. We assume that (\ref{eq:patch}) solves (\ref{eq:fp}) for some even $n$, writing the PSA compactly as
\bea
p_{\ldots v' s s' u u' z z'\ldots} &=&  X^{\rm L}_{\ldots v' s s'} X_{s s' u u'} X_{u u' z z'} X^{\rm R}_{z z' \ldots}, \\
p'_{\ldots z z' s s' u u' v \ldots} &=& X^{\prime \rm L}_{\ldots z z'} X'_{z z' s s'} X'_{s s' u u'} X^{\prime\rm R}_{u u' v \ldots}, 
\eea
where $X^{{\rm L}}_{\un{s}}$, or $X^{\prime{\rm L}}_{\un{s}}$, denote a {\em patch product}\footnote{General component-wise product with two overlapping adjacent indices between each pair of tensor-component factors, exactly like in Eqs.~(\ref{eq:patch}).} of a tensor $L$, or $L'$, with an appropriate number (which could also be zero) of tensors 
$X$, or $X'$, while $X^{{\rm R}}_{\un{s}}$, or $X^{\prime{\rm R}}_{\un{s}}$, denote a patch product of an appropriate number of tensors $X$, or $X'$, with $R$, or $R'$. Then, the condition that the PSA
\bea
p_{\ldots v' s s' t t' u u' z z'\ldots} &=&  X^{\rm L}_{\ldots v' s s'} X_{s s' t t'} X_{t t' u u'} X_{u u' z z'} X^{\rm R}_{z z' \ldots}, \\
p'_{\ldots z z' s s' t t' u u' v \ldots} &=& X^{\prime \rm L}_{\ldots z z'} X'_{z z' s s'} X'_{s s' t t'} X_{t t' u u'} X^{\prime\rm R}_{u u' v \ldots}, 
\eea
for the system of size $n+2$ again solves the fixed point equations (\ref{eq:fp}), namely
\bea
\frac{ (U_{\rm e}\un{p})_{\ldots v' s s' t t' u u' z z'\ldots} }{ (U_{\rm e}\un{p})_{\ldots v' s s' u u' z z'\ldots}} &=& 
\frac{ p'_{\ldots v' s s' t t' u u' z z'\ldots} }{p'_{\ldots v' s s' u u' z z'\ldots}}, \\
\frac{ (U_{\rm o}\un{p}')_{\ldots z z' s s' t t' u u' v \ldots}}{ (U_{\rm o}\un{p}')_{\ldots z z' s s' u u' v \ldots}} &=& 
\frac{ p_{\ldots z z' s s' t t' u u' v \ldots} }{p_{\ldots z z' s s' u u' v \ldots}},
\eea
amounts exactly to $X$-system (\ref{ext1},\ref{ext2}).\hfill$\square$
\end{proof}

We could have even started the induction at $n=6$ as an additional set of $2\times 2^7$ composition identities involving the boundary tensors can be straightforwardly verified using computer algebra as well
\bea
\frac{L'_{s'tt'}X'_{tt'uu'}}{L'_{s'uu'}} &=& \frac{L_{s'\chi(s'tt')t'}X_{\chi(s'tt')t'\chi(t'uu')u'}X_{\chi(t'uu')u'\chi(u'zz')z'}}{L_{s'\chi(s'uu')u'}X_{\chi(s'uu')u'\chi(u'zz')z'}}, \label{ext3}\\
\frac{X_{ss'tt'}R_{tt'u}}{R_{ss'u}} &=& \frac{X'_{z\chi(zz's)s\chi(ss't)}X'_{s\chi(ss't)t\chi(tt'u)}R'_{t\chi(tt'u)u}}{X'_{z\chi(zz's)s\chi(ss'u)}R'_{s\chi(ss'u)u}}. \label{ext4}
\eea

\section{Lax pair, observables and correlations in the steady state}

\label{sect:observables}

Having an explicit result for the NESS probability vectors $\un{p},\un{p}'$ (\ref{eq:patch},\ref{finalansatz},\ref{solution}) at hand we shall now
address a natural question of computation of physical observables, such as density profiles and density-density correlations in the steady state.
To start with, let us define the nonequilibrium partition function via normalization of the total probability of the state, and the corresponding transfer matrix.

We observe the following obvious identities, noting again that $n=2m$,
\be
Z_{n} = \sum_{\un{s}\in{\cal C}} p_{\un{s}} =\un{l} \cdot T^{m-2} \un{r}, \qquad
Z'_{n} = \sum_{\un{s}\in{\cal C}} p'_{\un{s}} = \un{l}' \cdot {T'}^{m-2} \un{r}',
\ee
where 
\bea
\un{l} = \un{l}_0 + \un{l}_1,\quad \un{l}' = \un{l}'_0 + \un{l}'_1, \quad
\un{r} = \un{r}_0 + \un{r}_1,\quad \un{r}' = \un{r}'_0 + \un{r}'_1,
\eea
and
$\un{l}_u,\un{l}'_u,\un{r}_u,\un{r}'_u$, $u\in\{0,1\}$, are vectors from $\RR^4$ with components labelled as $0,1,2,3$, namely:
\bea
&& (\un{l}_{u'})_{2s+s'} = L_{u'ss'},\quad (\un{r}_{u})_{2s+s'} = R_{ss'u}, \nonumber \\
&& (\un{l}'_{u'})_{2s+s'} = L'_{u'ss'},\quad (\un{r}'_{u})_{2s+s'} = R'_{ss'u},
\eea
and $T,T'\in{\rm End}(\RR^4)$ are $4\times 4$ transfer matrices with components
\bea
T_{2s+s',2u+u'} = X_{ss'uu'},\quad T'_{2s+s',2u+u'}=X'_{ss'uu'}\, .
\eea
Straightforward calculation by means of the explicit solution (\ref{solution}) shows that $T$ and $T'$ are similar, i.e. $\exists W\in{\rm End}(\RR^4)$, 
\be
T' = W T W^{-1},
\label{eq:isospc}
\ee 
such that also 
\be
\un{l}' = \kappa \un{l} W^{-1},\quad \un{r}' = \kappa^{-1} W \un{r},
\ee
where
\be
\kappa = \frac{(\lambda +2) (\lambda  \mu +\mu
   -2)}{(\mu +2) (\lambda  \mu +\lambda -2)},
\ee
and consequently
\be
Z_n = Z'_n,
\ee
for any pair of driving parameters $\lambda,\mu \in (-1,1)$, which is nothing but the conservation of total probability.

Writing a pair of independent {\em spectral parameters} as
\be
\omega  \equiv x_4 = \varphi(\lambda,\mu),\;\; \xi \equiv x_5 = \varphi(\mu,\lambda),\quad \varphi(\lambda,\mu)\equiv
-\frac{(\mu +2) (\lambda  \mu +\lambda -2)}{(\lambda  \mu +\mu
   -2)^2}
\ee
we find compact expressions for the transfer- and the intertwining matrices
\bea
&& T = \left(
\begin{array}{cccc}
 1 & 1 & 1 & 1 \\
 \xi  \omega  & \xi  \omega  & \frac{1}{\xi } & \omega  \\
 \xi  \omega  & \xi  \omega  & \xi  \omega  & \xi  \omega 
   \\
 \xi  & \xi  & 1 & \xi  \omega  \\
\end{array}
\right), \quad T' = \left(
\begin{array}{cccc}
 1 & 1 & 1 & 1 \\
 \xi  \omega  & \xi  \omega  & \frac{1}{\omega } & \xi  \\
 \xi  \omega  & \xi  \omega  & \xi  \omega  & \xi  \omega 
   \\
 \omega  & \omega  & 1 & \xi  \omega  \\
\end{array}
\right), \\
&& 
W = \left(
\begin{array}{cccc}
 \frac{\omega+1}{\omega } & \frac{1}{\omega } & -\frac{1}{\xi 
   \omega } & -\frac{1}{\xi  \omega } \\
 -\frac{\omega+1}{\omega } & -\frac{1}{\omega } &
   \frac{\xi +1}{\xi  \omega } & \frac{\xi +1}{\xi  \omega
   } \\
 \xi  & \xi  & 0 & -1 \\
 0 & 0 & 0 & 1 \\
\end{array}
\right).
\eea
Remarkably, $T$ and $T'$ are swapped upon an exchange of driving parameters $\lambda$ and $\mu$, or equivalently, exchanging the spectral parameters 
$\omega$ and $\xi$:
\be
T(\omega,\xi) \equiv T'(\xi,\omega).
\ee
In fact, $W$ acts as an {\em intertwiner} connecting two subsequent temporal layers, 
\be
W(\omega,\xi) T(\omega,\xi) = T(\xi,\omega) W(\omega,\xi),
\label{eq:iso2}
\ee
satisfying an {\em inversion identity}
\be
[W(\omega,\xi)]^{-1} = W(\xi,\omega).
\ee
On the other hand, Eq. (\ref{eq:isospc}) [or (\ref{eq:iso2})] can be interpreted as an {\em isospectral problem} 
(or discrete space-time zero-curvature condition) with a {\em Lax pair}
$\{ T(\omega,\xi), W(\omega,\xi) \}$. This is a clear indication of Lax integrability of our nonequilibrium steady state.
 
The transfer matrices $T,T'$ are singular with, in general, three nonvanishing eigenvalues, 
$T=V {\rm diag}(\tau_1,\tau_2,\tau_3,0) V^{-1}, 
T' = V' {\rm diag}(\tau_1,\tau_2,\tau_3,0) {V'}^{-1}$, with $W=V' V^{-1}$, reading explicitly
\bea
\tau_1 &=& \frac{(\lambda  \mu -4)^2}{(\lambda  \mu +\lambda -2)
   (\lambda  \mu +\mu -2)}, \nonumber \\
\tau_{2,3} &=& \frac{ \lambda \mu (\lambda + \mu + 8) + 4 (\lambda + \mu) \pm (\lambda - \mu)\sqrt{D}}{2 (\lambda  \mu
   +\lambda -2) (\lambda  \mu
   +\mu -2)}, \nonumber \\
D &=&   \lambda \mu (\lambda \mu - 12) - 8 (\lambda + \mu).
\label{tau}
   \eea
We remark two important facts: (i) in the whole open parameter domain $\mu,\lambda \in (-1,1)$, $\tau_1$ represents the leading eigenvalue $\tau_1 > |\tau_{2,3}|$.
(ii) In the limit $\mu,\lambda \to 1$, the three eigenvalues collapse, $\tau_1/\tau_{2,3} \to 1$, so we should see long range correlations there (to be discussed below); namely, the correlation length should diverge as $\lambda,\mu \to 1$. Also, an algebraic curve along which the discriminant vanishes $D(\lambda,\mu)=1$ is potentially interesting, as it signals discontinuous behaviour separating the regime of $|\tau_2|=|\tau_3|$ from the regime of $|\tau_2| \neq |\tau_3|$. 

A key feature of integrability of our boundary driven CA is a {\em compatibility condition} between the bulk transfer matrix and the boundary vectors, namely
\be
\un{l} T = \tau_1 \un{l},\quad \un{l}' T' = \tau_1 \un{l}',\quad T\un{r} = \tau_1 \un{r},\quad T'\un{r}' = \tau_1 \un{r}',
\label{magic}
\ee
relations, which can be readily verified from our analytic solution. This yields a particularly simple expression for the partition sum
\be
Z_n = (\un{l}\cdot\un{r}) \tau_1^{m-2}.
\ee
We are now ready to compute the density profiles. 
Let us define the following pair of diagonal matrices 
\be
D_{\rm e}={\rm diag}(0,0,1,1),\quad  D_{\rm o}={\rm diag}(0,1,0,1).
\ee
Then the steady-state density profiles on even and odd {\em bulk} sites express as:
\bea
\rho_{2k} &=& \frac{1}{Z_n}\sum_{s_1,s_2,\dots s_n} s_{2k}\,p_{s_1,s_2,\dots s_n} = \frac{1}{Z_n} \un{l}\cdot T^{k-1} D_{\rm e} T^{m-1-k} \un{r},  \nonumber \\
\rho_{2k+1} &=& \frac{1}{Z_n}\sum_{s_1,s_2,\dots s_n} s_{2k+1}\,p_{s_1,s_2,\dots s_n} = \frac{1}{Z_n}  \un{l}\cdot T^{k-1} D_{\rm o} T^{m-1-k} \un{r} \, , \label{density}
\eea
for $k=1,2,\ldots m-1$, while at the boundary sites
\be
\rho_{1} = \frac{1}{Z_n} \un{l}_1\cdot T^{m-2} \un{r},\quad
\rho_{n} = \frac{1}{Z_n} \un{l}\cdot T^{m-2} \un{r}_1 \,.
\ee
The compatibility conditions (\ref{magic}) immediately imply flat -- ballistic -- density profiles apart from the boundary sites
\bea
\rho_1 &=& \frac{\un{l}_1\cdot\un{r}}{\un{l}\cdot\un{r}} = \frac{2(1-\alpha)(\lambda\mu+\lambda+\mu)+(\lambda-2)(\lambda\mu-4)}{2
   ( \lambda +\mu +8 - \lambda \mu)},  \nonumber \\
\rho_{j} &=& \frac{\un{l}\cdot D_{\rm e}\un{r}}{\un{l}\cdot\un{r}} =  \frac{\un{l}\cdot D_{\rm o}\un{r}}{\un{l}\cdot\un{r}} = 
  \frac{\lambda +\mu +4}{ \lambda +\mu +8 - \lambda\mu}, \qquad {\rm for}\;\; 1 < j < n,\\
\rho_n &=& \frac{\un{l}\cdot\un{r}_1}{\un{l}\cdot\un{r}}= \frac{2(1-\gamma)(\lambda  \mu +\lambda
   +\mu )+(\mu -2) (\lambda  \mu -4)}{2
   ( \lambda +\mu +8 - \lambda \mu)}\, .\nonumber
\eea
Interestingly, the bulk steady-state density can only take values in the interval $\rho_j \in (\frac{2}{5}, \frac{2}{3})$, with the extreme value of 
maximum density $\frac{2}{3}$ (minimum density $\frac{2}{5}$) reached in the limit $\mu,\lambda\to 1$ ($\mu,\lambda\to -1$).

\begin{figure}
 \centering	
\vspace{-1mm}
\includegraphics[width=0.8\columnwidth]{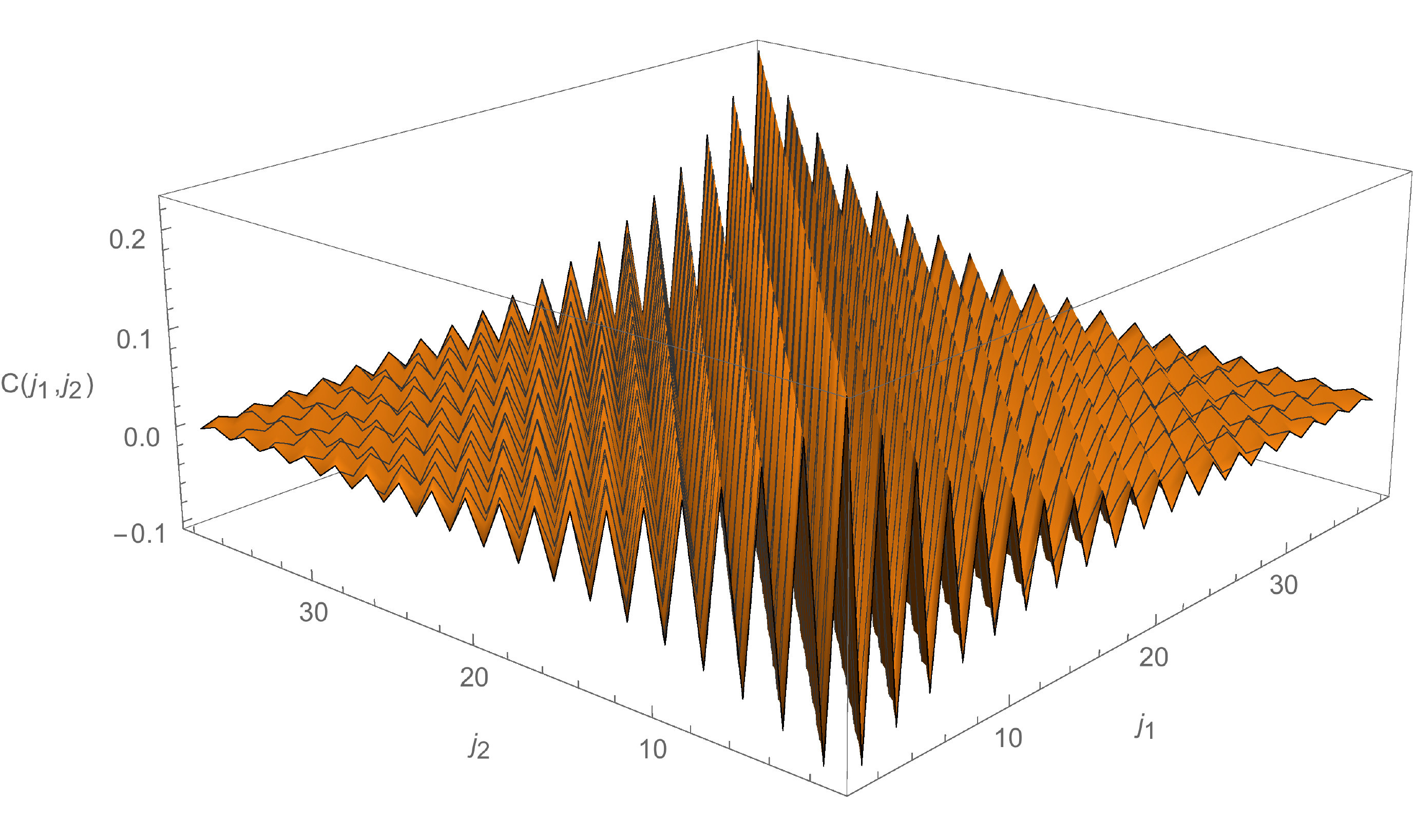}
\vspace{-1mm}
\caption{Exact connected 2-point density-density correlation function in NESS for a system size $n=40$ and driving parameters $\lambda=9/10,\mu=19/20$. Note that the boundary frame $j_{1,2}\in\{1,n\}$ is excluded from the plot for clarity.}
\label{2point}
\end{figure}

Similarly one computes a steady-state 2-point density-density correlation function, say for even-even sites $2k$,$2k'$, assuming $k < k'$:
\bea
\tilde{C}_{2k,2k'}&=&\frac{1}{Z_n}\sum_{s_1,s_2,\dots s_n} s_{2k} s_{2k'}\,p_{s_1,s_2,\dots s_n} = 
\frac{1}{Z_n} \un{l}\cdot T^{k-1} D_{\rm e} T^{k'-k}D_{\rm e} T^{m-1-k'} \un{r} \nonumber \\
&=& \frac{D_{\rm e}\un{l}\cdot T^{k'-k} D_{\rm e}\un{r}}{\un{l}\cdot T^{k'-k} \un{r}}, 
\label{2pointC}
\eea
and similarly for other pairs of sites, even-odd, odd-even and odd-odd, exchanging the corresponding $D_{\rm e}$ with $D_{\rm o}$.
Then, the connected 2-point function is defined as
\be
C_{j,j'} = \tilde{C}_{j,j'} - \rho_{j} \rho_{j'}.
\label{conn}
\ee
Writing an eigenvalue decomposition of the transfer matrix as
\be
T = \sum_{\nu=1}^3 \tau_\nu\, \un{\psi}_\nu \otimes \un{\phi}_\nu
\ee
where $\un{\phi}_1 = \un{l}/\sqrt{\un{l}\cdot\un{r}}$, $\un{\psi}_1 = \un{r}/\sqrt{\un{l}\cdot\un{r}}$, so that $\phi_\nu \cdot \psi_{\nu'} = \delta_{\nu,\nu'}$,
we can rewrite the connected correlation explicitly as
\be
C_{2k,2k'} = \frac{(D_{\rm e}\un{l}\cdot \un{\psi}_2)(\un{\phi}_2 \cdot D_{\rm e} \un{r})}{\un{l}\cdot\un{r}} \left(\frac{\tau_2}{\tau_1}\right)^{k'-k}
+  \frac{(D_{\rm e}\un{l}\cdot \un{\psi}_3)(\un{\phi}_3 \cdot D_{\rm e} \un{r})}{\un{l}\cdot\un{r}} \left(\frac{\tau_3}{\tau_1}\right)^{k'-k}\,.\quad
\ee
This demonstrates that the connected 2-point correlation function (\ref{conn}) in the bulk $1 < j,j' < n$ is indeed only a function of the difference of indices (positions) which decays exponentially $\sim \exp(-|j-j'|/\ell)$ with the correlation scale
$\ell = 1/\log |\tau_1/\tau_2|$, and depends {\em only} on the difference driving parameters $\lambda,\mu$ and {\em not} on $\alpha,\beta,\gamma,\delta$ separately. See Fig.~\ref{2point} for an example. Extension of such calculations to higher $k$-point connected 
correlations is straightforward; they would all decay exponentially $\sim \exp(-|j-j'|/\ell)$ in difference of any pair of adjacent spatial coordinates $j,j'$.

And finally, let us compute the steady state soliton currents. The current can be computed as the density of right-movers minus the density of left movers \cite{bob}.
The density of right-movers, computed as 
\be J_{\rm R} = Z_n^{-1} \sum_{\un{s}} s_{2k} s_{2k+1} p_{\un{s}} = Z_n^{-1} \un{l}\cdot T^{k-1} D_{\rm e} D_{\rm o} T^{m-1-k} \un{r} = 
\frac{\un{l}\cdot D_{\rm e}D_{\rm o}\un{r}}{\un{l}\cdot \un{r}},
\ee 
is independent of location $k$ in the steady state and reads explicitly
\be
J_{\rm R} = \frac{\lambda + 2}{ \lambda +\mu +8 - \lambda \mu}.
\ee
Similarly, the density of left-movers reads
\bea
J_{\rm L} &=& Z_n^{-1} \sum_{\un{s}} s_{2k+1} s_{2k+2} p_{\un{s}} =  \un{l}\cdot T^{k-1} D_{\rm o} T D_{\rm e} T^{m-1-k} \un{r} = 
\frac{\un{l}\cdot D_{\rm o}TD_{\rm e}\un{r}}{\tau_1\un{l}\cdot \un{r}}
\nonumber \\
&=&  \frac{\mu + 2}{ \lambda +\mu +8 - \lambda \mu},
\eea
with the overall steady state soliton current
\be
J = J_{\rm R}-J_{\rm L} =  \frac{\lambda - \mu}{ \lambda +\mu +8 - \lambda \mu}.
\label{solcur}
\ee
Note the expected linear-response behaviour for small driving parameters, namely the current becomes linearly proportional 
to the bias $J\sim \frac{1}{8} (\lambda - \mu)$.

\section{Local conservation laws}

\label{sect:cl}
 
Let us now try to approach the problem of finding possibly a complete set of independent conservation laws of RCA54 dynamics from a more formal nonequilibrium point of view. We shall take an approach analogous to the construction of quasilocal conservation laws of integrable quantum spin chains via dissipative boundary driving \cite{prl,review}. First, let us equip the space ${\cal S}$ of probability distributions with a Hilbert inner product
\be
(\un{u}|\un{v}) = 2^{-n}\,\un{u}\cdot\un{v} =  2^{-n} \sum_{\un{s}\in {\cal C}} u_{\un{s}} v_{\un{s}}, \quad \un{u},\un{v}\in{\cal S}.
\label{metric}
\ee
Writing orthogonal basis vectors on $\RR^2$ as $\un{\omega}^0 = (1,1)$ and $\un{\omega}^1 = (1,-1)$, we can write a convenient {\em orthonormal basis} $\{\un{\omega}^{b_1,b_2,\ldots,b_n},b_j\in\ZZ_2 \}$ of ${\cal S}$ as
\be
\un{\omega}^{\un{b}} = \un{\omega}^{b_1}\otimes \un{\omega}^{b_2}\otimes \cdots \otimes \un{\omega}^{b_n}, \qquad (\un{\omega}^{\un{b}}|\un{\omega}^{\un{b}'}) = \delta_{\un{b},\un{b}'},
\label{MPAomega}
\ee
meaning that 
\be
\un{\omega}^\un{b}_\un{s} = \prod_{j=1}^n \omega^{b_j}_{s_j} = (-1)^{\sum_{j=1}^n b_j s_j}.
\label{defomega}
\ee
The PSA  (\ref{eq:patch}) for the unnormalized NESS then immediately translates to a more standard form of a {\em matrix product ansatz}
\be
\un{p} = \sum_{\un{b}\in\{0,1\}^n} (\un{l}^{b_1}\cdot\sigma^{b_2}_1 \sigma^{b_3}_2 T \sigma^{b_4}_1 \sigma^{b_5}_2 T \cdots T \sigma^{b_{n-2}}_1 \sigma^{b_{n-1}}_2 \un{r}^{b_n}) \un{\omega}^{\un{b}},
\label{pness}
\ee
and analogous expression for $\un{p}'$, with {\em contravariant} boundary vectors
\be
\un{l}^b = \half \sum_{s\in\{0,1\}} (-1)^{b s} \un{r}_s,\quad \un{r}^b = \half \sum_{s\in\{0,1\}} (-1)^{b s} \un{r}_s,
\label{defcontra}
\ee
and $\sigma^b_{1,2}$, $b\in\{0,1\}$ are diagonal $4\times 4$ matrices, defined as
\be
\sigma^b_1 = \sigma^b \otimes \one_2,\quad \sigma^b_2 = \one_2 \otimes \sigma^b,\quad  \sigma^b = \half\begin{pmatrix} 1 & 0 \cr 0 & (-1)^b \end{pmatrix}.
\label{defsigma}
\ee
Clearly, it can be directly verified that with our definitions (\ref{defomega},\ref{defcontra},\ref{defsigma}), the expression (\ref{pness}) is equivalent to the first line of Eqs.~(\ref{eq:patch}),
since 
\be 
\sum_b \un{l}^b \omega^b_s = \un{l}_s, \quad \sum_b \un{r}^b \omega^b_s = \un{r}_s,
\quad
\sum_b \sigma^b \omega^b_s = \begin{pmatrix} \delta_{s,0} & 0 \cr 0 & \delta_{s,1}\end{pmatrix}.
\ee
A special state ${(\un{\omega}^0)}^{\otimes n}=\un{\omega}^{00\ldots 0}$ represents a uniform distribution over all $2^n$ configurations (`infinite temperature state').
Any state of the form
$ 
\un{\psi}^{(k,r)} = (\un{\omega}^0)^{\otimes k} \otimes \un{v} \otimes (\un{\omega}^0)^{\otimes (n-k-r)}
$ 
for some $2^r$ dimensional vectors $\un{v}\in(\RR^{2})^{\otimes r}$ is defined as a $r$-{\em local} state, supported on sites $[k,k+r-1]$, and its components may depend {\em only} on the coordinates from the supported set $s_{k},\ldots,s_{k+r-1}$,
namely $\psi^{(k,r)}_{s_1,s_2,\ldots,s_n} = v_{s_k,s_{k+1},\ldots,s_{k+r-1}}$.
In fact, introduction of the Hilbert space metric (\ref{metric}) identifies the state space with its dual --- the space of observables, so one may interpret a vector $\un{\psi}^{(k,r)}$ also as a $r$-local observable.
For example, $\un{\rho}^{(j)} = (\un{\omega}^0)^{\otimes (j-1)} \otimes (0,1) \otimes (\un{\omega}^0)^{\otimes (n-j)}$ is the {\em density}, with expectation  (\ref{density}) given  as 
\be
\rho_j = (\un{\rho}^{(j)}|\un{p}).
\ee

\begin{definition}
A local conservation law $\un{Q} \in {\cal S}$ of a boundary driven RCA is defined as  an extensive sum of a shifted $r-$local observable (for some {\em even} integer $r$ independent of size $n$), written in terms of a vector
$\un{q} \in \RR^{2^r}$, 
\be
\un{Q} = \sum_{k=1}^{(n-r)/2} (\un{\omega}^0)^{\otimes(2k-1)} \otimes \un{q} \otimes (\un{\omega}^0)^{\otimes(n-r-2k+1)} 
\ee
for which its time-difference in one step is localized near the boundaries of the system. More precisely,
\be
U \un{Q} - \un{Q} = \un{g} \otimes (\un{\omega}^0)^{\otimes (n-r')} + (\un{\omega}^0)^{\otimes (n-r')}\otimes \un{h},
\label{conslaw}
\ee
for some {\em remainder} observables, specified by vectors $\un{g},\un{h} \in \RR^{2^{r'}}$,  localized near boundaries with $n$-independent support size $r'$.
\end{definition}
Since, when approaching the thermodynamic limit $n\to \infty$, the square norm of translationally invariant sum of local observables is {\em extensive} in $n$, 
\be
(\un{Q}|\un{Q}) = \left(\frac{\un{q}\cdot\un{q}}{2^r}\right)\left(\frac{n-r}{2}\right) \propto n,
\ee
while the remainder --- RHS of (\ref{conslaw}) has a bounded (in $n$) norm, we can conclude that such $\un{Q}$ is exactly conserved in the bulk in the thermodynamic limit. A formal proof, invoking causality of RCA in place of the Lieb-Robinson bound, would be a straightforward extension of an analogous result for quantum chains \cite{ip_cmp}.

We shall now derive exact conservation laws of CA rule 54 using exactly the same strategy as for dissipatively boundary driven quantum chains \cite{prl,review}.
The NESS probability vector $\un{p}$  is already a potential candidate for a conservation law, Eq. (\ref{conslaw}), since 
\be 
U\un{p} - \un{p}=0,
\label{eq:ss}
\ee 
provided it could be identified with a local observable. This is possible in the trivial (equlibrium) case of zero biases $\lambda=\mu=0$, and $\alpha,\gamma=0$, where NESS is trivial $\un{p}^0 = (\un{\omega}^0)^{\otimes n}$. Let us set $\alpha=\gamma=0$ for simplicity in the following, while considering arbitrary values of $\alpha,\gamma$ would only alter the boundary cells and hence the remainder terms $\un{g},\un{h}$ and not the bulk density $\un{q}$.
Taking a derivative of Eq.~(\ref{eq:ss}) with expression (\ref{pness}), with respect to, say $\lambda$, and putting $\lambda=\mu=0$ at the end, we obtain exactly a conservation law (\ref{conslaw}), by identifying:
\be U = U|_{\lambda=\mu=0},\qquad \un{Q}^\lambda = \partial_\lambda \un{p}|_{\lambda=\mu=0},
\ee
with $(r=4)$-local bulk density
\be
\un{q}^\lambda = \sum_{a,a',b,b'\in\{0,1\}} (\un{v}\cdot \sigma^{a}_1 \sigma^{a'}_2 T_\lambda \sigma^{b}_1 \sigma^{b'}_2 T_0 \un{v}) \un{\omega}^{aa'bb'},
\ee
where
\bea
&&\un{v} =  \un{l}^0|_{\lambda=\mu=0} = \un{r}^0|_{\lambda=\mu=0}=(1,1,1,1),\nonumber \\
&& T_\lambda = \partial_\lambda T|_{\lambda=\mu=0}=\left(
\begin{array}{cccc}
 0 & 0 & 0 & 0 \\
 1 & 1 & -\frac{3}{2} & -\frac{1}{2} \\
 1 & 1 & 1 & 1 \\
 \frac{3}{2} & \frac{3}{2} & 0 & 1 \\
\end{array}
\right), \; T_0 = T|_{\lambda=\mu=0}=\left(
\begin{array}{cccc}
 1 & 1 & 1 & 1 \\
 1 & 1 & 1 & 1 \\
 1 & 1 & 1 & 1 \\
 1 & 1 & 1 & 1 \\
\end{array}\right).
\nonumber
\eea
The local remainder (boundary) terms $\un{g},\un{h}$ are generated through terms where $\partial_\lambda$ either hits $E^{\alpha,\alpha-\lambda}$ in the propagator $U$, or the boundary vectors $\un{l}^{b}$, or $\un{r}^b$ in the
expression (\ref{pness}).
Similarly, we obtain another local conservation law by differentiating with respect to $\mu$ instead of $\lambda$:
\bea
&&\un{Q}^\mu = \partial_\mu \un{p}|_{\lambda=\mu=0},\\
&&\un{q}^\mu = \sum_{a,a',b,b'\in\{0,1\}}
4(\un{v}\cdot \sigma^{a}_1 \sigma^{a'}_2 T_\mu \sigma^{b}_1 \sigma^{b'}_2 \un{v})\un{\omega}^{aa'bb'},\nonumber \\
&&T_\mu = \partial_\mu T|_{\lambda=\mu=0}=
\left(
\begin{array}{cccc}
 0 & 0 & 0 & 0 \\
 1 & 1 & \frac{1}{2} & \frac{3}{2} \\
 1 & 1 & 1 & 1 \\
 -\frac{1}{2} & -\frac{1}{2} & 0 & 1 \\
\end{array}
\right).
\nonumber
\eea
Writing $\un{Q}^\pm = \half(\un{Q}^\lambda \pm \un{Q}^\mu)$, and consequently  $\un{q}^\pm = \half(\un{q}^\lambda \pm \un{q}^\mu)$, we find for the density of $\un{Q}^-$
\be
\un{q}^- = \un{\omega}^{1100}-\un{\omega}^{0110} + \un{\omega}^{0010}-\un{\omega}^{1000},
\ee
or in terms of explicit dependence on cell occupation numbers $s_j$
\be
q^-_{ss't t'} = 4 (s s' - s' t),
\ee
which is exactly (4 times) the conserved net soliton current (\ref{solcur}) as discovered in Ref.~\cite{bob}.

The second conservation law is less trivial
\bea
q^+_{ss't t'}  = 4 (s+s'+ s s' t + s' t t') - 6 (s s' + s' t),
\eea
and to best of our knowledge has not been discussed before. We note that both quantities $Q^\pm$ should be exactly conserved for a purely deterministic RCA with periodic boundary conditions.\footnote{
Since this is a $\ZZ_2$ system, two independent extensive local conservation laws $\un{Q}^\pm$ are perhaps enough for integrability. Note, however, that 
\be
Q^\pm_\un{s}=\sum_k q^\pm_{s_{2k},s_{2k+1},s_{2k+2},s_{2k+3}}
\ee 
take values in $\ZZ$ and not in 
$\ZZ_2$.}

\section{Discussion and conclusions}

\begin{figure}
 \centering	
\vspace{-1mm}
\includegraphics[width=0.65\columnwidth]{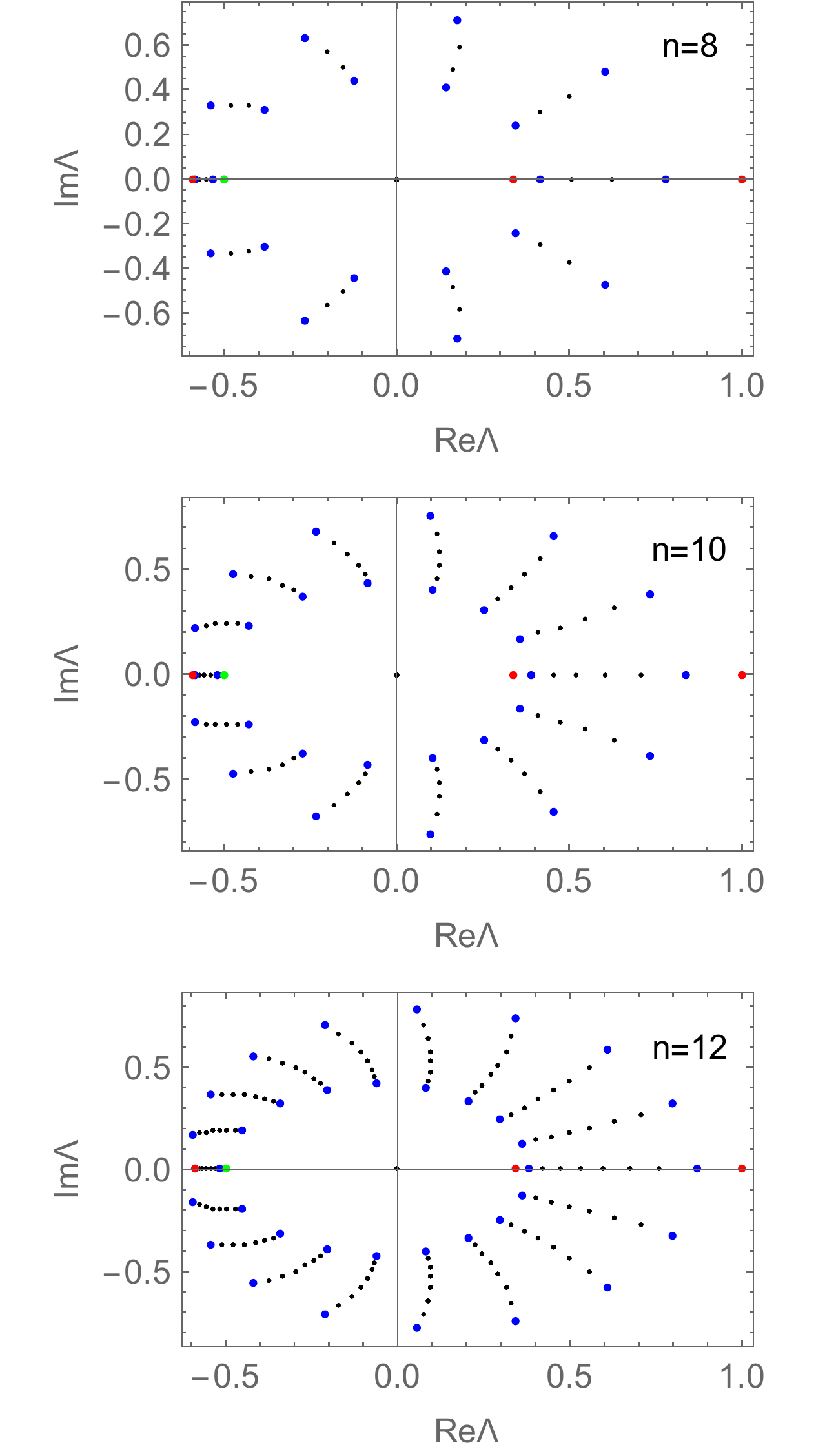}
\vspace{-1mm}
\caption{Numerical computation of full decay spectra of $2^n\times 2^n$ Markov matrices $U$ for RCA54 with $\alpha=0.9$, $\beta=0.1$, $\gamma=0.7$, $\delta=0.2$, and for three different sizes $n=8$ (top), $n=10$ (middle), $n=12$ (bottom).
The color indicates the complexity of eigenvectors as characterized by the Schmidt rank of bipartition: 2 (green), 3 (red), 6 (blue), $\ge 8$ and $n$-dependent (black). The total number of nonzero eigenvalues is $2^{n-2}$, while all the colored (low Schmidt rank) eigenvalues are {\em nondegenerate} while the other eigenvalues are typically exponentially (in $n$) degenerate. }
\label{figdecay}
\end{figure}

\label{sect:discussion}

So far we have studied only the properties of the steady state. An obvious follow up question concerns studying the full relaxation dynamics to the steady state, which amounts to studying the spectrum of decay modes, i.e. eigenvalues $\{ \Lambda_j,j=1,\ldots 2^n\}$ of the Markov matrix
\be
U \un{p}_j = \Lambda_j \un{ p}_j.
\ee
The leading eigenvalue $\Lambda_1 = 1$ corresponds to NESS, while all the others, corresponding to the so-called decay modes, lie strictly inside the unit circle 
$|\Lambda_j| < 1$, $j>1$, as following from our Theorem 1. So far we have not been able to provide any exact or rigorous results on the decay modes -- and the progress here could be very difficult as we need to devise a particular kind of Bethe ansatz -- so we instead report some intriguing results of numerical computations which should strongly motivate further study.
Apart from the Markov eigenvalues $\{ \Lambda_j\}$ we also analyzed the complexity of the corresponding eigenvectors by calculating their Schmidt rank -- the number of nonvanishing singular values of the $2^{n/2} \times 2^{n/2}$ matrix 
$(P_j)_{\un{s},\un{s}'} \equiv (p_j)_{s_1,s_2,\ldots s_{n/2},s'_1,s'_2,\ldots s'_{n/2}}$. For NESS ($j=1$) the Schmidt number is always 3 as we have proven above (e.g. it follows from the fact that the transfer matrices $T,T'$ have rank 3, see (\ref{tau})). Remarkably, there seem to be always two other eigenvectors (decay modes) with Schmidt number 3. 
The subleading eigenvector ($j=2$) corresponds to Schmidt number 6 (independent of $n$!) which gives hope that this decay mode would be analytically tractable. Curiously, there appears to be even an eigenvector of lower complexity than NESS (Schmidt number 2) with eigenvalue $\Lambda=-1/2$. The rest of the spectrum is organized in bands with end-points corresponding to non-degenerate 
eigenvectors with Schmidt number 6. Since a picture says more than a thousand words, the reader is welcome to go and stare at the Fig.~\ref{figdecay}.

In conclusion, we have presented an exact analytic treatment of the steady-state properties of a strongly interacting deterministic many-body system driven by stochastic boundaries. We considered arguably the simplest possible strongly interacting bulk dynamics which possesses certain features of integrability like solitons, namely the reversible $\ZZ_2$ cellular automaton with global conservation laws. 
To facilitate our analysis we have developed a novel algebraic  ansatz for describing strongly correlated classical many-body probability states, namely the patch state ansatz. We expect that our approach should be applicable
for constructing nonequilibrium steady-states of general classical deterministic integrable interacting theories \cite{faddeev} driven by compatible Markovian stochastic boundaries. The fundamental relation proposed here, which needs to be generalized to other integrable models, is a particular `fusion', or composition formula (\ref{ext1},\ref{ext2}) which we propose to call the $X$-system.
In an integrable lattice model with a continuous dynamical variable\footnote{For example, one may consider the Hirota equation which yields several interesting physical models in various continuum limits, e.g. the sine-Gordon model.} $\varphi_x$ at each physical site $x$, the patch tensor $X$ 
would be a function of four variables $X(\varphi_{x},\varphi_{x+1},\varphi_{x+2},\varphi_{x+3})$ and  (\ref{ext1},\ref{ext2}) would result in  some exactly solvable nonlinear functional equations.
For instance, intriguing numerical results in the integrable lattice Landau Lifshitz classical spin chain model 
suggest  \cite{bojan}  existence of a
nontrivial nonequilibrium phase transition from ballistic to diffusive steady-state and a nontrivial quasilocal conservation law in the ballistic regime, which, according to the results presented here, might be analytically treatable with appropriate integrable Markovian boundary baths.

For general classical integrable systems which are canonically defined via the zero-curvature condition in terms of a Lax pair, one should explore the possibility of connection between the patch tensors $L,X,R$ and the Lax operators generating equations of motion in the bulk. It should be noted, however, that compatibility/integrability condition between the deterministic bulk and stochastic boundaries would require all the patch tensors to explicitly depend on the Markov rates at the boundaries, just like in the model solved here. 
The question whether this can be accommodated for in terms of a single family of Lax operators, in a similar way as encoding the boundary dissipation in the complex auxiliary spin of the $U_q(\mathfrak{sl}_2)$ Lax operator in the case of boundary driven quantum $XXZ$ chains \cite{ilievski,review}, remains
an open problem for future research.

\section*{Acknowledgements}

T.P. thanks E. Ilievski for discussions and useful remarks on the manuscript. The work has been supported by the research grants 
P1-0044, J1-5439 and N1-0025 of Slovenian Research Agency (T.P.), and Spanish MICINN grant MTM2012-39101-C02-01 (C.M.-M.).

\bigskip

\end{document}